\newcolumntype{+}{!{\vrule width 2pt}}
\newlength\savedwidth
\renewcommand{\@biblabel}[1]{\quad#1.}
\begin{document}
\vspace*{0.125in}


\begin{flushleft}

{\Large
    \textbf\newline{Information-Theoretical Measures for Developmental Cell-Fate Proportioning Processes}
}
\newline
\\
Michael Alexander Ramirez Sierra\textsuperscript{1,2,*},
Thomas R. Sokolowski\textsuperscript{1}
\\
\bigskip
\textbf{1} Frankfurt Institute for Advanced Studies (FIAS), Ruth-Moufang-Straße 1, 60438 Frankfurt am Main, Germany
\\
\textbf{2} Goethe-Universität Frankfurt am Main, Faculty of Computer Science and Mathematics, Robert-Mayer-Straße 10, 60054 Frankfurt am Main, Germany
\\
\bigskip

* ramirez-sierra@fias.uni-frankfurt.de

\end{flushleft}

\graphicspath{{./Manuscript_Graphics/}} 



\section*{Abstract}

Self-organization is a fundamental process of complex biological systems, particularly during the early stages of development. In the preimplantation mammalian embryo, blastocyst formation exemplifies a self-organized system, involving the correct spatio-temporal segregation of three distinct cell fates: trophectoderm (TE), epiblast (EPI), and primitive endoderm (PRE). Despite the significance of this class of processes, quantifying the information content of self-organizing patterning systems remains challenging due to the complexity and the qualitative diversity of developmental mechanisms. In this study, we applied a recently proposed information-theoretical framework which quantifies the self-organization potential of cell-fate patterning systems, employing a utility function that integrates (local) positional information and (global) correlational information extracted from developmental pattern ensembles. Specifically, we examined a stochastic and spatially resolved simulation model of EPI-PRE lineage proportioning, evaluating its information content across various simulation scenarios with different numbers of system cells. To overcome the computational challenges hindering the application of this novel framework, we developed a mathematical strategy that indirectly maps the low-dimensional cell-fate counting probability space to the high-dimensional cell-fate patterning probability space, enabling the estimation of self-organization potential for general cell-fate proportioning processes. Overall, this novel information-theoretical framework provides a promising, universal approach for quantifying self-organization in developmental biology. By formalizing measures of self-organization, the employed quantification framework offers a valuable tool for uncovering insights into the underlying principles of cell-fate specification and the emergence of complexity in early developmental systems.


\section*{Author summary}

We present the application of a newly proposed information-theoretical framework for quantifying the self-organization potential of cell-fate patterning processes. More crucially, given that the application of this quantification framework requires the estimation of a high-dimensional probability space, whose cardinality grows exponentially as a function of the number of system cells and the number of possible cell fates, we devise a generic mathematical and computational strategy to compute the information content of our simulated ICM (cell-fate proportioning) system. This formal-yet-tractable framework for quantifying developmental self-organization provides a universal approach that promises to enhance the discovery of optimal parameter regimes and the construction of highly detailed mechanistic models, offering a powerful tool for unveiling the common drivers of correct self-organized embryonic progression, irrespective of the underlying developmental mechanism or organism.


\section*{Motivation and Background}

The concept of self-organization is essential for the understanding of developmental biology systems \cite{bruckner_information_2024, plusa_common_2020, schroter_local_2023, zhu_principles_2020, morales_embryos_2021, beck_understanding_2024, ryan_lumen_2019}, but its quantification is complex in theory and challenging in practice because of the large variety of qualitatively different developmental mechanisms \cite{bruckner_information_2024}. For the preimplantation mammalian embryo, one of the prime examples of self-organized systems, its leading objective is the specification of three distinct cell fates: the trophectoderm (TE), the epiblast (EPI), and the primitive endoderm (PRE), thus giving rise to the blastocyst. The correct formation of the blastocyst, critical for the success of the post-implantation mammalian embryo, involves the spatio-temporal segregation of the two supporting tissue lineages (TE and PRE) alongside the early progenitor tissue lineage (EPI) of the embryo proper. This segregation must occur with well-balanced proportions of cell fates and within a relatively short developmental period \cite{plusa_common_2020, schroter_local_2023, zhu_principles_2020, ryan_lumen_2019}. The fascinating aspect about blastocyst formation is the coalescence of two seemingly contrasting and incongruent processes: a high self-renewal capacity, granting reliable maintenance of cell-potency, that coexists with a highly robust and reproducible cell-fate decision making \cite{plusa_common_2020}. More notably, both processes materialize despite the strongly stochastic conditions arising from highly variable intrinsic biochemical noise, extrinsic environmental fluctuations, as well as inhomogeneous spatial information mainly due to cell-cell signaling delays \cite{schroter_local_2023, morales_embryos_2021}. As described in detail in \cite{ramirez-sierra_ai-powered_2024}, embryonic cell-cell interactions, operating through a local cellular communication mechanism, orchestrate the global (blastocyst-wide) spatial patterns (i.e., emergence of complexity) that characterize this stochastic self-organizing system \cite{schroter_local_2023}.

Recent work has formalized the idea of ``positional information'' within the context of developmental systems, where the emergence of positional information is driven by maternal (or exogenous) morphogen gradients \cite{zagorski_decoding_2017, tkacik_many_2021, seyboldt_latent_2022, sokolowski_deriving_2023, bruckner_information_2024}. Yet, there is currently no universal (or generic) framework for quantifying the information content of self-organized cell-fate patterning processes such as blastocyst formation. For these self-organized systems, the input signals are transient and endogenous to the developing cell population; hence, a general quantification approach is required to estimate the information content of the system's output signals (cell-fate patterns) \cite{morales_embryos_2021, bruckner_information_2024}.

In response to this challenge, a novel theoretical study \cite{bruckner_information_2024} leverages information theory and interprets self-organization as a utility function ($\mathrm{U}$), proposing a measure that is the sum of two complementary quantities: positional information ($\mathrm{PI}$) and correlational information ($\mathrm{CI}$); i.e., $\mathrm{U} = \mathrm{PI} + \mathrm{CI}$. This information-theoretical measure mathematically encompasses the two sufficient and necessary criteria for self-organized cell-fate patterning processes \cite{bruckner_information_2024}: (1) developmental patterns emerge from initially homogeneous cell populations without exogenous (spatially coordinated) signaling factors, except for endogenous biochemical noise sources; (2) developmental patterns (cell-fate distributions) are reproducible across replicates of embryonic-like system experiments or simulations. Therefore, evaluating this utility function over some developmental pattern ensemble quantifies the information content or the self-organization potential of a given system \cite{bruckner_information_2024}.

More formally, a cell-fate patterning (i.e., developmental pattern) ensemble is a collection of realizations or replicates of a given stochastic process, which encapsulates the spatio-temporal dynamics governing the target behavior of some embryonic system. This abstraction is particularly suitable to embryonic or developmental systems, where correct spatial and temporal coordination is critical for pattern formation. Within this framework \cite{bruckner_information_2024}, $\mathrm{PI}$, the positional information, which quantifies the local spatial relationship between cell-fate identity and cellular positioning, can be decomposed as a linear combination of two entropies: $\mathrm{PI} = \mathrm{S}_{\mathrm{PAT}} - \mathrm{S}_{\mathrm{SCF}}$. $\mathrm{S}_{\mathrm{PAT}}$ is the ``patterning entropy'' which measures the cell-fate patterning diversity, and $\mathrm{S}_{\mathrm{SCF}}$ is the ``spatial-correlation-free entropy'' which measures the information content of an ancillary or implicit developmental system where there are no spatial correlations among cell-fate decisions. Similarly, $\mathrm{CI}$, the correlational information, which quantifies the global statistical structure underlying cell-fate identity and tissue patterning, can be decomposed as a linear combination of two entropies: $\mathrm{CI} = \mathrm{S}_{\mathrm{SCF}} - \mathrm{S}_{\mathrm{REP}}$. $\mathrm{S}_{\mathrm{REP}}$ is the ``reproducibility entropy'' which measures the cell-fate patterning distribution reproducibility over an ensemble of developmental patterns; estimating $\mathrm{S}_{\mathrm{REP}}$ requires ample knowledge of the inherent patterning probability space, which itself is remarkably challenging to gather in practice. A complete mathematical formalism for computing these quantities is detailed in the following section.

This quantification framework for developmental self-organization is promising as it proposes a generic utility function whose maximization, in conjunction with domain knowledge, would help identify optimal target behaviors for a given system within a set of degenerate solutions (cell-fate patterns). Such a normative theory would benefit, for instance, the exploration of the parameter spaces for the models of ICM specification (\cite{ramirez-sierra_ai-powered_2024}) and blastocyst patterning, fine-tuning and tightening their posterior parameter distribution approximations. However, for practical scenarios, this optimization is difficult to perform because the calculation of $\mathrm{S}_{\mathrm{REP}}$ involves the estimation of a high-dimensional probability space, whose cardinality grows exponentially as a function of the number of system cells ($n$) and the number of cell fates ($z$): $z^{n}$.

In this study, inspired by a recent collaboration with the lab of Gašper Tkačik at the Institute of Science and Technology Austria (ISTA - Klosterneuburg), we devise a mathematical and computational strategy to calculate $\mathrm{S}_{\mathrm{REP}}$ for the (EPI-PRE) cell-fate proportioning system introduced in \cite{ramirez-sierra_ai-powered_2024}: the Inferred-Theoretical Wild-Type ``ITWT'' system. This strategy is potentially applicable to similar self-organizing systems, and it can exploit experimental or synthetic (simulation) data; by formally relating the cell-fate patterning and cell-fate counting probability spaces, our strategy makes feasible the estimation of $\mathrm{S}_{\mathrm{REP}}$. We also demonstrate that, for an ideal or pure cell-fate proportioning process, $\mathrm{PI} = 0$ always, and $\mathrm{CI}$ is a strictly monotonically decreasing function of $n$. More importantly, we propose an alternative tractable measure that operates between the (high-dimensional) cell-fate ``patterning space'' and the (low-dimensional) cell-fate ``counting space'', facilitating the calculation of the reproducibility entropy for developmental pattern ensembles of cell-fate proportioning processes.


\section*{Methods and Results}

\subsection*{Basic definitions}

Let us fix some notation. $n$ is the number of system cells. $z$ is the number of cell fates. $\mathbb{N}$ is the set of nonnegative integers. $I_{0} = \{0, \ldots, n\} \subset \mathbb{N}$; $I_{1} = \{1, \ldots, n\} \subset \mathbb{N}$. $J_{1} = \{1, \ldots, z\} \subset \mathbb{N}$. The symbol $\vee$ denotes the logical OR conjunction. The symbol $\wedge$ denotes the logical AND conjunction. $N$ is a random variable uniformly distributed on $I_{0} \vee I_{1}$. $Z$ is a random variable uniformly distributed on $J_{1}$. $\mathcal{V}(\mathfrak{d})$ denotes a general vector space $\mathcal{V}$ of dimension $\mathfrak{d} \in \mathbb{N}$. $S$ denotes the Shannon entropy.

\theoremstyle{definition} \newtheorem{definition}{Definition}[section]

\begin{definition} \textbf{Decisioning Space} $\Omega(\vec{X})$. The decisioning space $\Omega(\vec{X})$ is the set of realizations ($d$-way tensors) $\vec{x}$ of a stochastic process $\Vec{X}$ such that each of their $n$ components is a realization $x_{\vec{v}}$ of the random variable $X_{\vec{v}}$ mapping to one of the fate indexes $j \in J_{1}$. The cell index $\vec{v}$ represents the coordinate vector of a cell in a $d$-dimensional vector space $\mathcal{V}(d)$; $d \in \{1, 2, 3\}$. \label{Definition_Decisioning_Space} \end{definition}

Although the decisioning space has an intuitive correspondence with the 3-dimensional physical space, instead we operate on an alternative space which is isomorphic to $\Omega(\vec{X})$ for simplicity; i.e., we simplify the cell indexes from coordinate vectors to straightforward positions (or locations) in $n$-element sequential arrangements.

\begin{definition} \textbf{Patterning Space} $\Omega(\vec{Z}) \cong \Omega(\vec{X})$. The patterning space $\Omega(\vec{Z})$ is the set of realizations (vectors) $\vec{z} = (z_{1}, \ldots, z_{i}, \ldots, z_{n})$ of a stochastic process $\vec{Z} = (Z_{1}, \ldots, Z_{i}, \ldots, Z_{n})$ such that each of their $n$ components is a realization $z_{i}$ of the random variable $Z_{i}$ mapping to one of the cell-fate indexes $j \in J_{1}$, where $n \leq \sum_{i=1}^{n}Z_{i} \leq nz$. The cell index $i$ represents the location of a cell in an $n$-dimensional vector space $\mathcal{V}(n)$. In symbols, \begin{equation*} \Omega(\vec{Z}) = \left\{\vec{z} \in \mathcal{V}(n) \Bigm\vert \vec{Z}:\mathcal{V}(n) \rightarrow J_{1}^{n} \quad \wedge \quad n \leq \sum_{i=1}^{n}Z_{i} \leq nz\right\} \textrm{.} \end{equation*} Here, \begin{equation*} J_{1}^{n} = \left\{(j_{1}, \ldots, j_{i}, \ldots, j_{n}) \in \mathbb{N}^{n} \Bigm\vert j_{i} \in J_{1} \quad \forall i \in I_{1}\right\} \textrm{.} \end{equation*} \label{Definition_Patterning_Space} \end{definition}

The elements of the patterning space are $n$-component sequences, and the cardinality of this set can be easily calculated by using the combinatorics concept of a permutation with repetition: each of the $n$ components can take any of the $z$ possible values, thus $\lvert\Omega(\vec{Z})\rvert = z^{n}$.

\begin{definition} \textbf{Patterning Probability Distribution} $P_{Z} \equiv P(Z)$. \begin{equation*} P_{Z} \doteq P(Z_{N} = Z \bigm\vert N \in I_{1} \wedge Z = j) \quad \forall j \in J_{1} \textrm{.} \end{equation*} In words, $P_{Z}$ is the overall probability of observing cell fate $Z = j \in J_{1}$ across the whole developmental pattern ensemble, regardless of cell position; $N \in I_{1}$. For convenience, $P_{Z}$ is equivalent to $P(Z)$. \label{Definition_Patterning_Probability_Distribution} \end{definition}

\begin{definition} \textbf{Spatial-Correlation-Free Probability Distribution} $P_{Z,N} \equiv P(Z,N)$. \begin{equation*} P_{Z,N} \doteq P(Z_{N} = Z \bigm\vert N = i \wedge Z = j) \quad \forall i \in I_{1} \wedge \forall j \in J_{1} \textrm{.} \end{equation*} In words, $P_{Z,N}$ is the probability of cell $N = i \in I_{1}$ adopting fate $Z = j \in J_{1}$ across the developmental pattern ensemble. For convenience, $P_{Z,N}$ is equivalent to $P(Z,N)$. \label{Definition_Spatial_Correlation_Free_Probability_Distribution} \end{definition}

\begin{definition} \textbf{Reproducibility Probability Distribution} $P_{\vec{Z}} \equiv P(\vec{Z})$. \begin{equation*} P_{\vec{Z}} \doteq P(\vec{Z} = \vec{z}) \quad \forall \vec{z} \in \Omega(\vec{Z}) \textrm{.} \end{equation*} In words, $P_{\vec{Z}}$ is the probability of observing realization (cell-fate pattern) $\vec{z} \in \Omega(\vec{Z})$. Recall that the vector $\vec{z}$ denotes the realization of the random vector $\vec{Z}$. For convenience, $P_{\vec{Z}}$ is equivalent to $P(\vec{Z})$. \label{Definition_Reproducibility_Probability_Distribution} \end{definition}

Using our notation, now we can formally define the information-theoretical measures proposed in the study by Brückner and Tkačik \cite{bruckner_information_2024}.

\begin{definition} \textbf{Patterning Entropy} $\mathrm{S}_{\mathrm{PAT}}$. \begin{equation*} \mathrm{S}_{\mathrm{PAT}} \doteq S\left[P_{Z}\right] = -\sum_{j=1}^{z} P(Z=j) \log_{2}[P(Z=j)] \textrm{.} \end{equation*} Recall that $\mathrm{S}_{\mathrm{PAT}}$ measures the cell-fate patterning diversity. Trivially, if $\exists ! j \in J_{1}$ such that $P(Z=j) = 1$, then $\min(\mathrm{S}_{\mathrm{PAT}}) = 0$. Similarly, if $P(Z=j) = 1/z$ for all $j \in J_{1}$, then $\max(\mathrm{S}_{\mathrm{PAT}}) = \log_{2}[z]$. \label{Definition_Patterning_Entropy} \end{definition}

\begin{definition} \textbf{Spatial-Correlation-Free Entropy} $\mathrm{S}_{\mathrm{SCF}}$. \begin{equation*} \mathrm{S}_{\mathrm{SCF}} \doteq \frac{1}{n} S\left[P_{Z,N}\right] = -\frac{1}{n} \sum_{i=1}^{n} \sum_{j=1}^{z} P(Z=j,N=i) \log_{2}[P(Z=j,N=i)] \textrm{.} \end{equation*} Recall that $\mathrm{S}_{\mathrm{SCF}}$ measures the information content of an ancillary or implicit developmental system where there are no spatial correlations among cell-fate decisions. \label{Definition_Spatial_Correlation_Free_Entropy} \end{definition}

\begin{definition} \textbf{Reproducibility Entropy} $\mathrm{S}_{\mathrm{REP}}$. \begin{equation*} \mathrm{S}_{\mathrm{REP}} \doteq \frac{1}{n} S\left[P_{\vec{Z}}\right] = -\frac{1}{n} \sum_{\vec{z} \in \Omega(\vec{Z})} P(\vec{Z}=\vec{z}) \log_{2}[P(\vec{Z}=\vec{z})] \textrm{.} \end{equation*} Recall that $\mathrm{S}_{\mathrm{REP}}$ measures the cell-fate patterning distribution reproducibility over an ensemble of developmental patterns. Trivially, if $\exists ! \vec{z} \in \Omega(\vec{Z})$ such that $P(\vec{Z}=\vec{z}) = 1$, then $\min(\mathrm{S}_{\mathrm{REP}}) = 0$. Similarly, if $P(\vec{Z}=\vec{z}) = 1/z^{n}$ for all $\vec{z} \in \Omega(\vec{Z})$, then $\max(\mathrm{S}_{\mathrm{REP}}) = \log_{2}[z]$. \label{Definition_Reproducibility_Entropy} \end{definition}

\begin{definition} \textbf{Utility Function} $\mathrm{U} = \mathrm{PI} + \mathrm{CI}$. The utility function $\mathrm{U}$ is the sum of the two entropies $\mathrm{PI} = \mathrm{S}_{\mathrm{PAT}} - \mathrm{S}_{\mathrm{SCF}}$ and $\mathrm{CI} = \mathrm{S}_{\mathrm{SCF}} - \mathrm{S}_{\mathrm{REP}}$. Consequently, \begin{equation*} \mathrm{U} = \mathrm{S}_{\mathrm{PAT}} - \mathrm{S}_{\mathrm{REP}} \textrm{.} \end{equation*} The optimization of $\mathrm{U}$ is hence a trade-off between maximizing $\mathrm{S}_{\mathrm{PAT}}$ and minimizing $\mathrm{S}_{\mathrm{REP}}$. Here, $\max(\mathrm{S}_{\mathrm{PAT}}) = \log_{2}[z]$ refers to a ``uniform patterning''; i.e., all (possible) cell fates are equiprobable. Likewise, $\min(\mathrm{S}_{\mathrm{REP}}) = 0$ refers to a ``perfect reproducibility''; i.e., all (observed) cell-fate patterns are identical. \label{Definition_Utility_Function} \end{definition}

However, because of the cardinality of the patterning space, the calculation of the reproducibility entropy is generally an unfeasible computational task, except for trivial scenarios. For estimating $\mathrm{S}_{\mathrm{REP}}$, here we employ a straightforward transformation from the cell-fate ``patterning space'' to the cell-fate ``counting space'', a partition of the patterning space using combinatorics, and a (pseudo-inverse) transformation mapping from counting vectors to sets of patterning vectors.

\subsection*{A transformation from patterning space to counting space}

Let us introduce some important and formal definitions.

\begin{definition} \textbf{Counting Space} $\Omega(\vec{N})$. The counting space $\Omega(\vec{N})$ is the set of realizations (vectors) $\vec{n} = (n_{1}, \ldots, n_{j}, \ldots, n_{z})$ of a stochastic process $\vec{N} = (N_{1}, \ldots, N_{j}, \ldots, N_{z})$ such that each of their $z$ components is a realization $n_{j}$ of the random variable $N_{j}$ mapping to one of the cell-count indexes $i \in I_{0}$, where $\sum_{j=1}^{z}N_{j} = n$. The fate index $j$ represents the location of a fate in a $z$-dimensional vector space $\mathcal{V}(z)$. In symbols, \begin{equation*} \Omega(\vec{N}) = \left\{\vec{n} \in \mathcal{V}(z) \Bigm\vert \vec{N}:\mathcal{V}(z) \rightarrow I_{0}^{z} \quad \wedge \quad \sum_{j=1}^{z}N_{j} = n\right\} \textrm{.} \end{equation*} Here, \begin{equation*} I_{0}^{z} = \left\{(i_{1}, \ldots, i_{j}, \ldots, i_{z}) \in \mathbb{N}^{z} \Bigm\vert i_{j} \in I_{0} \quad \forall j \in J_{1}\right\} \textrm{.} \end{equation*} \label{Definition_Counting_Space} \end{definition}

\begin{definition} \textbf{Transformation from Patterning Space to Counting Space} $T:\Omega(\vec{Z}) \rightarrow \Omega(\vec{N})$. The transformation $T$ is a surjective map between the patterning space $\Omega(\vec{Z})$ and the counting space $\Omega(\vec{N})$ such that \begin{equation*} \begin{gathered} T(\vec{Z}) = T[(Z_{1},\ldots,Z_{i},\ldots,Z_{n})] = (N_{1},\ldots,N_{j},\ldots,N_{z}) = \vec{N} = \\ = \left(\frac{1}{1}\sum_{i_{1}\in\pi_{1}}Z_{i_{1}}, \ldots, \frac{1}{j}\sum_{i_{j}\in\pi_{j}}Z_{i_{j}}, \ldots, \frac{1}{z}\sum_{i_{z}\in\pi_{z}}Z_{i_{z}}\right) \textrm{.} \end{gathered} \end{equation*} By construction, \begin{equation*} \pi_{j} \doteq \left\{i \in I_{1} \bigm\vert Z_{i} = j\right\} \quad \forall j \in J_{1} \textrm{.} \end{equation*} Here, the set $\pi = \{\pi_{1},\ldots,\pi_{j},\ldots,\pi_{z}\}$ is a partition (set of blocks) of the set $I_{1} = \{1,\ldots,n\}$; i.e., $\bigcup_{j=1}^{z}\pi_{j} = I_{1}$, and $\pi_{j_{1}} \bigcap \pi_{j_{2}} = \varnothing$ for all $j_{1} \neq j_{2} \in J_{1}$. Strictly, if $\exists j \in J_{1}$ such that $\pi_{j} = \varnothing$, then $\frac{1}{j}\sum_{i_{j}\in\pi_{j}}Z_{i_{j}} \doteq 0$. \label{Definition_Transformation} \end{definition}

We now consider the particular case of a binary cell-fate decision process, where a cell exhibiting an undifferentiated (progenitor) state can adopt either of two competing (progeny) states; thus, there are only three possible cell fates (including the undifferentiated cell fate), which is applicable to the ITWT system (see \cite{ramirez-sierra_ai-powered_2024}). Although it may seem restrictive to consider only such a binary cell-fate decision process, any complex decision can be decomposed as a sequence of (simple) binary decisions.

\newtheorem{proposition}{Proposition}[section]

\begin{proposition} \textbf{Counting Space Cardinality} $\lvert\Omega(\vec{N})\rvert$. Let $n \in \mathbb{N}$, and let $z = 3$. \begin{equation*} \lvert\Omega(\vec{N})\rvert = \binom{z+n-1}{n} = \frac{(z+n-1)!}{n!(z-1)!} = \frac{(n+1)(n+2)}{2} \textrm{.} \end{equation*} \label{Proposition_Counting_Space_Cardinality} \end{proposition}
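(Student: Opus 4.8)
The plan is to recognize that counting the elements of $\Omega(\vec{N})$ reduces to a classical enumeration problem. By Definition~\ref{Definition_Counting_Space}, each element $\vec{n} = (n_{1}, n_{2}, n_{3})$ is precisely a triple of nonnegative integers subject to the single linear constraint $\sum_{j=1}^{z} n_{j} = n$ with $z = 3$. The first step is to observe that the membership condition $n_{j} \in I_{0} = \{0, \ldots, n\}$ imposes nothing beyond nonnegativity: since every $n_{j} \geq 0$ and the components sum to $n$, each individual $n_{j}$ automatically satisfies $0 \leq n_{j} \leq n$. Hence $\lvert\Omega(\vec{N})\rvert$ equals the number of weak compositions of $n$ into exactly $z$ ordered parts, equivalently the number of nonnegative integer solutions of $\sum_{j=1}^{z} N_{j} = n$.

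The second step is to enumerate these solutions via the \emph{stars and bars} bijection. I would represent each solution by a linear arrangement of $n$ indistinguishable tokens (``stars'', one per cell) together with $z - 1$ separators (``bars''), where the number of stars falling into the $j$-th segment encodes the value $n_{j}$. This establishes a bijection between $\Omega(\vec{N})$ and the set of distinct arrangements of $n + (z-1)$ symbols, of which $z - 1$ are bars. Counting such arrangements immediately gives $\binom{n + z - 1}{z - 1} = \binom{n + z - 1}{n}$, since choosing the positions of the bars among all symbols determines the arrangement uniquely.

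Finally, I would specialize to $z = 3$ and simplify the binomial coefficient. Substituting yields $\binom{n+2}{2} = \frac{(n+2)!}{2!\, n!} = \frac{(n+1)(n+2)}{2}$, which is the claimed closed form.

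I do not anticipate a genuine obstacle, as the result is a direct application of a standard combinatorial identity. The only point requiring slight care is verifying that the stars-and-bars correspondence is well-defined in both directions and that the implicit per-component bound $n_{j} \leq n$ is fully subsumed by the explicit sum constraint, so that no solution is either over- or under-counted; once this is confirmed, the enumeration and the subsequent algebraic simplification are routine.
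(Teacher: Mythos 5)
Your proof is correct, but it takes a different route from the paper's. You invoke the stars-and-bars bijection: each solution of $\sum_{j=1}^{z} n_{j} = n$ corresponds to an arrangement of $n$ stars and $z-1$ bars, giving $\binom{n+z-1}{n}$ directly for arbitrary $z$, after which you specialize to $z = 3$. The paper instead argues by direct enumeration specific to $z = 3$: it fixes $n_{1} = n - k$ and observes that there are exactly $k+1$ vectors with $n_{2} + n_{3} = k$, so the total is the triangular sum $\sum_{i=1}^{n+1} i = \frac{(n+1)(n+2)}{2}$, which it then identifies \emph{a posteriori} with the multichoose coefficient $\binom{z+n-1}{n}$. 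Your bijective argument is the more general and arguably cleaner one --- it proves the binomial-coefficient formula for any number of fates $z$ rather than recovering it from a sum valid only for $z = 3$ --- and your explicit check that the per-component bound $n_{j} \leq n$ is subsumed by the sum constraint is a point of rigor the paper leaves implicit. The paper's summation, on the other hand, makes the quadratic closed form transparent without appeal to a bijection. Both establish the claimed identity; there is no gap in your argument.
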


\begin{proof} Without loss of generality, let us count the number of vectors $\vec{n} = (n_{1}, n_{2}, n_{3}) \in \Omega(\vec{N})$ for which $n_{2}+n_{3} \in \{0, 1, \ldots, n-1, n\}$. Trivially, $n_{1} \in \{n, n-1, \ldots, 1, 0\}$ given that $z = 3$ and ${\Vert\vec{n}\Vert}_{1} = n$. Consequently, there is one vector $\vec{n}$ such that $n_{2}+n_{3} = 0$ ($n_{1} = n$), there are two vectors $\vec{n}$ such that $n_{2}+n_{3} = 1$ ($n_{1} = n-1$), \ldots, there are $n$ vectors $\vec{n}$ such that $n_{2}+n_{3} = n-1$ ($n_{1} = 1$), and there are $n+1$ vectors $\vec{n}$ such that $n_{2}+n_{3} = n$ ($n_{1} = 0$). This recursive relation can be formalized via the concepts of ``multiset'' and ``multichoose coefficient'' (which is akin to the multinomial coefficient): \begin{equation*} \begin{gathered} \sum_{i=1}^{n+1} i = \binom{z+n-1}{n} = \frac{(z+n-1)!}{n!(z-1)!} = \frac{(n+2)!}{n!2!} = \\ = \frac{(n+2)(n+1)n!}{n!2!} = \frac{(n+1)(n+2)}{2} = \lvert\Omega(\vec{N})\rvert \textrm{.} \end{gathered} \end{equation*} \end{proof}

In Appendix Fig~\ref{ApexC_Fig1}, we illustrate a direct implication of Proposition \ref{Proposition_Counting_Space_Cardinality}; namely, if $n \gg 1$ then $((n+1)(n+2))/2 \ll z^{n}$. This detail is crucial because it indicates that, together with some extra assumptions, it is computationally feasible to tabulate all the realizable counting vectors. Following this tabulation, we can estimate counting vector probabilities (e.g., using experimental or simulation data) and indirectly map them to patterning vector probabilities, thus obtaining an approximation for the probability space of $\Omega(\vec{Z})$ from the available empirical knowledge about the probability space of $\Omega(\vec{N})$. We explicitly describe this strategy for estimating $\mathrm{S}_{\mathrm{REP}}$ over the next subsections.

\subsection*{A pseudo-inverse transformation from counting space to patterning space}

The first step of our $\mathrm{S}_{\mathrm{REP}}$ estimation strategy is the creation of a mapping from counting space to patterning space. Since $\lvert\Omega(\vec{N})\rvert < \lvert\Omega(\vec{Z})\rvert$, except for trivial cases, this mapping cannot be a bijection between these two spaces. Therefore, this mapping must relate (the probability of) a given counting vector to (the probability of) the unique set of patterning vectors giving rise to it under $T$ (Definition \ref{Definition_Transformation}). For a cell-fate proportioning process the main objective is to generate reproducible cell-fate counting distributions, irrespective of the resultant cell-fate spatial arrangements. Accordingly, a suitable assumption considers that all the elements belonging to the unique set of patterning vectors giving rise to a given counting vector under $T$ are equiprobable (uniform). Although the previous assumption is natural under this perspective, it is also flexible, allowing for distinct probability rules. However, in the following, we use the uniformity assumption to formalize our strategy.

The symbol $\mathcal{P}[\Omega(\vec{Z})]$ denotes the power set (set of all subsets) of the patterning space $\Omega(\vec{Z})$; i.e., $\mathcal{P}[\Omega(\vec{Z})] = \{\Omega \bigm\vert \Omega \subset \Omega(\vec{Z})\}$.

\begin{definition} \textbf{Pseudo-Inverse Transformation from Counting Space to Patterning Space} $M:\Omega(\vec{N}) \rightarrow \mathcal{P}[\Omega(\vec{Z})]$. The pseudo-inverse transformation $M$ is an injective map between the counting space $\Omega(\vec{N})$ and the power set of the patterning space $\mathcal{P}[\Omega(\vec{Z})]$ such that \begin{equation*} \begin{gathered} M(\vec{N}) \equiv \overset{M}{\Omega}(\vec{Z}\vert\vec{N}) \doteq \left\{\vec{z} \in \Omega(\vec{Z}) \bigm\vert T(\vec{z}) = \vec{n}\right\} \in \mathcal{P}[\Omega(\vec{Z})] \\ \forall \vec{n} \in \Omega(\vec{N}) \textrm{.} \end{gathered} \end{equation*} By construction, \begin{equation*} P\left[\vec{Z}\in\overset{M}{\Omega}(\vec{Z}\vert\vec{N}=\vec{n})\right] \doteq P(\vec{N}=\vec{n}) \textrm{.} \end{equation*} Moreover, \begin{equation*} P\left[\vec{Z}=\vec{z} \Bigm\vert \vec{z} \in M(\vec{N}=\vec{n})\right] \doteq \frac{P(\vec{N}=\vec{n})}{\left\lvert \overset{M}{\Omega}(\vec{Z}\vert\vec{N}=\vec{n}) \right\rvert} = P(\vec{N}=\vec{n})\frac{\prod_{j=1}^{z}n_{j}!}{n!} \textrm{.} \end{equation*} In other words, we assume that all the elements $\vec{z}$ belonging to the set $\overset{M}{\Omega}(\vec{Z}\vert\vec{N}) \equiv M(\vec{N})$ are equiprobable, for any $\vec{n} \in \Omega(\vec{N})$. \label{Definition_Pseudo_Inverse_Transformation} \end{definition}

The cardinality of the set $\overset{M}{\Omega}(\vec{Z}\vert\vec{N})$, for any $\vec{n} \in \Omega(\vec{N})$, \begin{equation*} \left\lvert \overset{M}{\Omega}(\vec{Z}\vert\vec{N}=\vec{n}) \right\rvert = \frac{n!}{n_{1}! \cdots n_{j}! \cdots n_{z}!} \end{equation*} is a straightforward application of the multinomial coefficient formula because, more formally, we are given an $n$-component sequence $\vec{z}$ and its associated $z$-component sequence $\vec{n}$ such that $T(\vec{z})=\vec{n}$ and ${\Vert\vec{n}\Vert}_{1} = \sum_{j=1}^{z}n_{j} = n$. Thus, we can easily enumerate the vectors $\vec{z}$ associated with a given vector $\vec{n}$ by considering partitions $\pi$ of an arbitrary $n$-element set into $z$ blocks, where the block $\pi_{j}$ contains exactly $n_{j}$ elements for all $j \in J_{1}$.

To properly frame our approach, we now consider several types of patterning spaces. In this way, we can meaningfully compare the proposed information-theoretical measures by creating multiple guiding references. The most important (guiding reference) type of patterning space, the ideal patterning space, combines the uniformity assumption underlying the pseudo-inverse transformation $M$ (Definition \ref{Definition_Pseudo_Inverse_Transformation}) with the assumption of a perfect counting reproducibility. Hence, the ideal patterning space represents the best possible case, providing an important performance benchmark for any given (empirical) cell-fate proportioning process. We formalize these patterning space types and several related ideas next.

Hereafter, $\overset{\star}{\vec{n}}=(\overset{\star}{n}_{1},\ldots,\overset{\star}{n}_{j},\ldots,\overset{\star}{n}_{z})$ denotes the ideal counting vector: a vector $\vec{n}$ belonging to the counting space $\Omega(\vec{N})$ which represents the ideal or perfect target cell counts for all the possible fates; e.g., for the ITWT system, $\overset{\star}{\vec{n}}=(\overset{\star}{n}_{\textrm{UND}},\overset{\star}{n}_{\textrm{EPI}},\overset{\star}{n}_{\textrm{PRE}})=(0,\frac{2}{5}n,\frac{3}{5}n)$. Recall that UND, EPI, and PRE refer to undifferentiated, epiblast, and primitive endoderm cell fates, respectively.

\begin{definition} \textbf{Ideal Patterning Space} $\overset{\star}{\Omega}(\vec{Z})$. Let \begin{equation*} \overset{M}{\Omega}\left(\vec{Z}\Bigm\vert\vec{N}=\overset{\star}{\vec{n}}\right) \doteq \left\{\vec{z} \in \Omega(\vec{Z}) \Bigm\vert T(\vec{z}) = \overset{\star}{\vec{n}}\right\} \textrm{.} \end{equation*} Consequently, by Definition \ref{Definition_Pseudo_Inverse_Transformation}, \begin{equation*} \left\lvert \overset{M}{\Omega}\left(\vec{Z}\Bigm\vert\vec{N}=\overset{\star}{\vec{n}}\right) \right\rvert = \frac{n!}{\overset{\star}{n}_{1}! \cdots \overset{\star}{n}_{j}! \cdots \overset{\star}{n}_{z}!} \textrm{.} \end{equation*} The ideal patterning space $\overset{\star}{\Omega}(\vec{Z})$ is a restriction of the patterning space $\Omega(\vec{Z})$ such that \begin{equation*} P\left[\vec{Z} \in \overset{M}{\Omega}\left(\vec{Z}\Bigm\vert\vec{N}=\overset{\star}{\vec{n}}\right)\right] = 1 \textrm{.} \end{equation*} Trivially, \begin{equation*} P\left[\vec{Z} \in \Omega(\vec{Z}) \setminus \overset{M}{\Omega}\left(\vec{Z}\Bigm\vert\vec{N}=\overset{\star}{\vec{n}}\right)\right] = 0 \textrm{.} \end{equation*} Moreover, also by Definition \ref{Definition_Pseudo_Inverse_Transformation}, \begin{equation*} P\left[\vec{Z}=\vec{z} \Bigm\vert \vec{z} \in \overset{\star}{\Omega}(\vec{Z})\right] \doteq \frac{P\left(\vec{N}=\overset{\star}{\vec{n}}\right)}{\left\lvert \overset{M}{\Omega}\left(\vec{Z}\Bigm\vert\vec{N}=\overset{\star}{\vec{n}}\right) \right\rvert} = \frac{\prod_{j=1}^{z}\overset{\star}{n}_{j}!}{n!} \textrm{.} \end{equation*} \label{Definition_Ideal_Patterning_Space} \end{definition}

\begin{definition} \textbf{Ideal Cell-Fate Proportioning Process}. An ideal or pure cell-fate proportioning process is a cell-fate proportioning process whose (sample) patterning space is, by Definition \ref{Definition_Ideal_Patterning_Space}, the ideal patterning space $\overset{\star}{\Omega}(\vec{Z})$. \label{Definition_Ideal_Cell_Fate_Proportioning_Process} \end{definition}

\begin{definition} \textbf{Uniform Patterning Space} $\overset{U}{\Omega}(\vec{Z})$. The uniform patterning space $\overset{U}{\Omega}(\vec{Z})$ is the patterning space $\Omega(\vec{Z})$ such that \begin{equation*} P(\vec{Z} = \vec{z}) = \frac{1}{\left\lvert \Omega(\vec{Z}) \right\lvert} = \frac{1}{z^{n}} \quad \forall \vec{z} \in \Omega(\vec{Z}) \textrm{.} \end{equation*} \label{Definition_Uniform_Patterning_Space} \end{definition}

\begin{definition} \textbf{Empirical Patterning Space} $\overset{E}{\Omega}(\vec{Z})$. The empirical patterning space $\overset{E}{\Omega}(\vec{Z})$ is the patterning space $\Omega(\vec{Z})$ such that all its elements are observations or measurements extracted from a given set of experiments or simulations; i.e., an empirical ensemble of developmental patterns. \label{Definition_Empirical_Patterning_Space} \end{definition}

For convenience, we will label the quantities related to the uniform, empirical, and ideal patterning spaces with the indicators $U$, $E$, and $\star$, respectively. Similarly, for convenience, we will employ an equivalent notation for the analogous counting spaces and their related quantities.

In Fig~\ref{ChapC_Fig1}, to ease the understanding of our notation, we present an illustrative example covering the ITWT system scenario where $n = 5$ and $z = 3$: Fig~\ref{ChapC_Fig1}[A] shows the complete patterning space $\Omega(\vec{Z})$; Fig~\ref{ChapC_Fig1}[B] shows the probability heatmaps associated with the uniform, empirical, and ideal patterning spaces, respectively; Fig~\ref{ChapC_Fig1}[C] shows the complete counting space $\Omega(\vec{N})$; Fig~\ref{ChapC_Fig1}[D] shows the probability heatmaps associated with the images under $T$ of the uniform, empirical, and ideal patterning spaces, respectively. Complementarily, in Appendix Fig~\ref{ApexC_Fig2}, we also present the counting vector probability heatmaps for the ITWT system scenarios where $z = 3$ and $n \in \{5,50,100\}$: we show the counting vector probability heatmaps associated with the images under $T$ of the uniform (Appendix Fig~\ref{ApexC_Fig2}[A, D, G]), empirical (Appendix Fig~\ref{ApexC_Fig2}[B, E, H]), and ideal (Appendix Fig~\ref{ApexC_Fig2}[C, F, I]) patterning spaces, respectively.

\subsection*{\texorpdfstring{An ideal cell-fate proportioning process has $\mathrm{U} = \mathrm{CI}$ ($\mathrm{PI} = 0$ always) where $\mathrm{CI}$ is a function of $n$}{An ideal cell-fate proportioning process has \textrm{U} = \textrm{CI} (\textrm{PI} = 0 always) where \textrm{CI} is a function of \textit{n}}}

Now we can formally derive the ideal versions of the proposed information-theoretical measures. Recall that an ideal cell-fate proportioning process (Definition \ref{Definition_Ideal_Cell_Fate_Proportioning_Process}) is a guiding-reference construct, which combines the uniformity assumption underlying the pseudo-inverse transformation $M$ with the assumption of a perfect counting reproducibility. Such a guiding-reference construct provides an important performance benchmark for contextualizing the application of our approach. Accordingly, for any given cell-fate proportioning process with an arbitrary number of system cells ($n$) and an arbitrary number of cell fates ($z$), we derive its ideal positional information, ideal correlational information, and ideal utility. These derivations pave the way for the last step of our strategy.

\begin{proposition} \textbf{Ideal Patterning Entropy} $\overset{\star}{\mathrm{S}}_{\mathrm{PAT}}$. \begin{equation*} \overset{\star}{\mathrm{S}}_{\mathrm{PAT}} = \log_{2}[n] - \frac{1}{n}\sum_{j=1}^{z}\overset{\star}{n}_{j}\log_{2}[\overset{\star}{n}_{j}] \textrm{.} \end{equation*} \label{Proposition_Ideal_Patterning_Entropy} \end{proposition}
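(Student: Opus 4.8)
The plan is to reduce the claim to computing the marginal patterning distribution $P(Z=j)$ over the ideal patterning space and then substituting it directly into the defining formula for $\mathrm{S}_{\mathrm{PAT}}$ (Definition \ref{Definition_Patterning_Entropy}). The entire content of the proposition is therefore the identification $P(Z=j) = \overset{\star}{n}_{j}/n$ for each $j \in J_{1}$; everything after that is an algebraic manipulation of logarithms.

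First I would establish that $P(Z=j) = \overset{\star}{n}_{j}/n$. By Definition \ref{Definition_Patterning_Probability_Distribution}, with $N$ uniform on $I_{1}$, one has $P(Z=j) = \frac{1}{n}\sum_{i=1}^{n} P(Z_{i}=j)$, where $P(Z_{i}=j)$ is the position-$i$ marginal of the reproducibility distribution $P_{\vec{Z}}$. The decisive observation is that, by Definitions \ref{Definition_Ideal_Patterning_Space} and \ref{Definition_Ideal_Cell_Fate_Proportioning_Process}, every pattern $\vec{z}$ carrying nonzero probability lies in $\overset{M}{\Omega}(\vec{Z}\vert\vec{N}=\overset{\star}{\vec{n}})$ and hence satisfies $T(\vec{z}) = \overset{\star}{\vec{n}}$; that is, each admissible pattern contains exactly $\overset{\star}{n}_{j}$ cells of fate $j$. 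Thus the count $\#\{i : z_{i}=j\}$ equals $\overset{\star}{n}_{j}$ deterministically, and by linearity of expectation $\sum_{i=1}^{n} P(Z_{i}=j) = \overset{\star}{n}_{j}$, giving $P(Z=j) = \overset{\star}{n}_{j}/n$. Equivalently, one may argue by the equiprobability assumption of $M$ (Definition \ref{Definition_Pseudo_Inverse_Transformation}) together with a multinomial count: among the $n!/\prod_{k}\overset{\star}{n}_{k}!$ equiprobable admissible patterns, a fraction $\overset{\star}{n}_{j}/n$ place fate $j$ at any fixed position $i$.

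Next I would substitute into $\mathrm{S}_{\mathrm{PAT}} = -\sum_{j=1}^{z} P(Z=j)\log_{2}[P(Z=j)]$, writing $\log_{2}[\overset{\star}{n}_{j}/n] = \log_{2}[\overset{\star}{n}_{j}] - \log_{2}[n]$ and splitting the sum. The first piece yields $-\frac{1}{n}\sum_{j}\overset{\star}{n}_{j}\log_{2}[\overset{\star}{n}_{j}]$, while the $\log_{2}[n]$ piece contributes $\frac{1}{n}\log_{2}[n]\sum_{j}\overset{\star}{n}_{j} = \log_{2}[n]$, since $\sum_{j}\overset{\star}{n}_{j} = n$ by the constraint defining the counting space (Definition \ref{Definition_Counting_Space}). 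Combining the two pieces gives the claimed expression.

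The only real care needed — and the point I expect to be the main (if modest) obstacle — is the clean derivation of $P(Z=j) = \overset{\star}{n}_{j}/n$ directly from the formal definitions, since it requires correctly threading together the marginalization over position (Definition \ref{Definition_Patterning_Probability_Distribution}), the support restriction of the ideal space (Definition \ref{Definition_Ideal_Patterning_Space}), and the equiprobability rule of $M$. I would also flag the degenerate case where some $\overset{\star}{n}_{j} = 0$ (as for UND in the ITWT example): there the corresponding term vanishes under the usual convention $0\log_{2}0 = 0$, consistent with a zero-probability fate contributing nothing to the entropy.
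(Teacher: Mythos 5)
Your proposal is correct and follows essentially the same route as the paper's proof: identify $\overset{\star}{P}_{Z=j} = \overset{\star}{n}_{j}/n$ from Definitions \ref{Definition_Ideal_Cell_Fate_Proportioning_Process} and \ref{Definition_Patterning_Probability_Distribution}, substitute into Definition \ref{Definition_Patterning_Entropy}, and simplify using $\sum_{j}\overset{\star}{n}_{j} = n$. You actually justify the marginal identity and the $0\log_{2}0$ convention more carefully than the paper, which simply asserts the former from the definitions.
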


\begin{proof} By Definitions \ref{Definition_Ideal_Cell_Fate_Proportioning_Process} and \ref{Definition_Patterning_Probability_Distribution}, \begin{equation*} \overset{\star}{P}_{Z=j} = \frac{\overset{\star}{n}_{j}}{n} \quad \forall j \in J_{1} \textrm{.} \end{equation*} By Definition \ref{Definition_Patterning_Entropy}, \begin{equation*} \overset{\star}{\mathrm{S}}_{\mathrm{PAT}} \doteq S\left[\overset{\star}{P}_{Z}\right] = -\sum_{j=1}^{z}\frac{\overset{\star}{n}_{j}}{n}\log_{2}\left[\frac{\overset{\star}{n}_{j}}{n}\right] \textrm{.} \end{equation*} Simplifying this expression, \begin{equation*} \begin{gathered} \overset{\star}{\mathrm{S}}_{\mathrm{PAT}} = -\frac{1}{n} \left[ \sum_{j=1}^{z}\overset{\star}{n}_{j}\log_{2}[\overset{\star}{n}_{j}] - n\log_{2}[n] \right] = \\ = \log_{2}[n] - \frac{1}{n}\sum_{j=1}^{z}\overset{\star}{n}_{j}\log_{2}[\overset{\star}{n}_{j}] \textrm{.} \end{gathered} \end{equation*} \end{proof}

\begin{proposition} \textbf{Ideal Spatial-Correlation-Free Entropy} $\overset{\star}{\mathrm{S}}_{\mathrm{SCF}}$. \begin{equation*} \overset{\star}{\mathrm{S}}_{\mathrm{SCF}} = \log_{2}[n] - \frac{1}{n}\sum_{j=1}^{z}\overset{\star}{n}_{j}\log_{2}[\overset{\star}{n}_{j}] \textrm{.} \end{equation*} \label{Proposition_Ideal_Spatial_Correlation_Free_Entropy} \end{proposition}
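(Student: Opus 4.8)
The plan is to exploit the structural parallel with Proposition~\ref{Proposition_Ideal_Patterning_Entropy}: I will show that the ideal spatial-correlation-free entropy reduces to precisely the ideal patterning entropy, so that $\overset{\star}{\mathrm{S}}_{\mathrm{SCF}} = \overset{\star}{\mathrm{S}}_{\mathrm{PAT}}$ and the stated formula follows at once. The crux is to compute the ideal spatial-correlation-free probability distribution $\overset{\star}{P}_{Z,N}$, that is, $P(Z=j,N=i)$ evaluated over the ideal patterning space $\overset{\star}{\Omega}(\vec{Z})$ (Definition~\ref{Definition_Ideal_Patterning_Space}).

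First I would determine $\overset{\star}{P}_{Z=j,N=i}$ by a direct counting argument. By Definition~\ref{Definition_Ideal_Cell_Fate_Proportioning_Process}, the ideal patterning space consists of the $\frac{n!}{\overset{\star}{n}_{1}! \cdots \overset{\star}{n}_{z}!}$ equiprobable arrangements of the multiset prescribed by $\overset{\star}{\vec{n}}$. To count those arrangements in which position $i$ carries fate $j$, I fix that position and arrange the remaining $n-1$ cells under the reduced count vector, obtaining $\frac{(n-1)!}{\overset{\star}{n}_{1}! \cdots (\overset{\star}{n}_{j}-1)! \cdots \overset{\star}{n}_{z}!}$ patterns. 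Dividing by the total yields
\begin{equation*}
\overset{\star}{P}_{Z=j,N=i} = \frac{\overset{\star}{n}_{j}}{n} \quad \forall i \in I_{1} \wedge \forall j \in J_{1} \textrm{.}
\end{equation*}
The essential feature is that this expression does not depend on the position index $i$; equivalently, the permutation symmetry of the ideal construction forces the per-position joint probability to coincide with the global marginal $\overset{\star}{P}_{Z=j} = \overset{\star}{n}_{j}/n$ already used in Proposition~\ref{Proposition_Ideal_Patterning_Entropy}.

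Next I would substitute this into the definition of $\overset{\star}{\mathrm{S}}_{\mathrm{SCF}}$ (Definition~\ref{Definition_Spatial_Correlation_Free_Entropy}). Since the summand is independent of $i$, the outer sum over the $n$ positions contributes a factor of $n$ that cancels the prefactor $1/n$, leaving
\begin{equation*}
\overset{\star}{\mathrm{S}}_{\mathrm{SCF}} = -\sum_{j=1}^{z}\frac{\overset{\star}{n}_{j}}{n}\log_{2}\left[\frac{\overset{\star}{n}_{j}}{n}\right] = S\left[\overset{\star}{P}_{Z}\right] = \overset{\star}{\mathrm{S}}_{\mathrm{PAT}} \textrm{,}
\end{equation*}
and the remaining algebraic simplification is identical to the one carried out in Proposition~\ref{Proposition_Ideal_Patterning_Entropy}, so I would simply invoke that result rather than repeat it.

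The only genuinely nontrivial step is the combinatorial (or symmetry) evaluation of $\overset{\star}{P}_{Z=j,N=i}$; once its position-independence is established, the statement collapses to the preceding proposition and, as a corollary, delivers $\overset{\star}{\mathrm{PI}} = \overset{\star}{\mathrm{S}}_{\mathrm{PAT}} - \overset{\star}{\mathrm{S}}_{\mathrm{SCF}} = 0$, consistent with the section heading.
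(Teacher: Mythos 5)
Your proposal is correct and follows essentially the same route as the paper: both establish $\overset{\star}{P}_{Z=j,N=i} = \overset{\star}{n}_{j}/n$ independently of $i$ and then substitute into Definition~\ref{Definition_Spatial_Correlation_Free_Entropy}, where the sum over positions cancels the $1/n$ prefactor. The only difference is that you supply an explicit multinomial counting argument for the value of $\overset{\star}{P}_{Z=j,N=i}$ (which the paper asserts directly from the definitions) and then delegate the final algebra to Proposition~\ref{Proposition_Ideal_Patterning_Entropy} instead of repeating it; both are harmless, indeed slightly tidier, variations.
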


\begin{proof} By Definitions \ref{Definition_Ideal_Cell_Fate_Proportioning_Process} and \ref{Definition_Spatial_Correlation_Free_Probability_Distribution}, \begin{equation*} \overset{\star}{P}_{Z,N} = \begin{bmatrix} \overset{\star}{P}_{1,1} & \cdots & \overset{\star}{P}_{1,i} & \cdots & \overset{\star}{P}_{1,n} \\ \vdots &  & \vdots &  & \vdots \\ \overset{\star}{P}_{j,1} & \cdots & \overset{\star}{P}_{j,i} & \cdots & \overset{\star}{P}_{j,z} \\ \vdots &  & \vdots &  & \vdots \\ \overset{\star}{P}_{z,1} & \cdots & \overset{\star}{P}_{z,i} & \cdots & \overset{\star}{P}_{z,n} \end{bmatrix} \quad \overset{\star}{P}_{Z=j,N=i} = \frac{\overset{\star}{n}_{j}}{n} \quad \forall i \in I_{1} \wedge \forall j \in J_{1} \textrm{.} \end{equation*} By Definition \ref{Definition_Spatial_Correlation_Free_Entropy}, \begin{equation*} \begin{gathered} \overset{\star}{\mathrm{S}}_{\mathrm{SCF}} \doteq \frac{1}{n} S\left[\overset{\star}{P}_{Z,N}\right] = -\frac{1}{n} \sum_{i=1}^{n} \sum_{j=1}^{z} \overset{\star}{P}_{Z=j,N=i} \log_{2}[\overset{\star}{P}_{Z=j,N=i}] = \\ = -\frac{1}{n} \left[ \sum_{j=1}^{z}\overset{\star}{n}_{j}\log_{2}[\overset{\star}{n}_{j}] - \sum_{j=1}^{z}\overset{\star}{n}_{j}\log_{2}[n] \right] = \\ = -\frac{1}{n} \left[ \sum_{j=1}^{z}\overset{\star}{n}_{j}\log_{2}[\overset{\star}{n}_{j}] - n\log_{2}[n] \right] = \\ = \log_{2}[n] - \frac{1}{n}\sum_{j=1}^{z}\overset{\star}{n}_{j}\log_{2}[\overset{\star}{n}_{j}] \textrm{.} \end{gathered} \end{equation*} \end{proof}

\begin{adjustwidth}{0in}{0in}
\includegraphics[width = 5.5in, height = 5.5in]{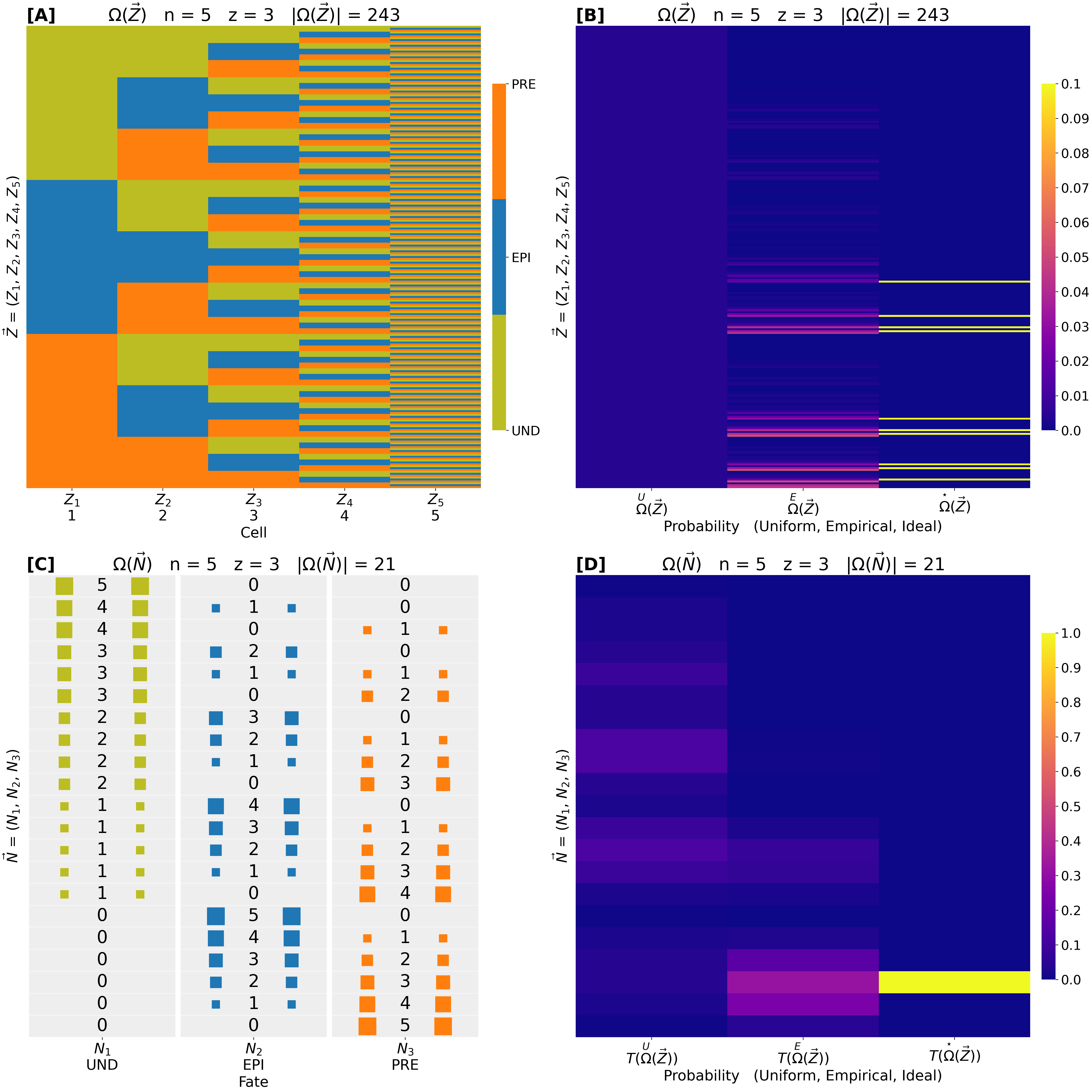} \centering 
\end{adjustwidth}
\begin{figure}[hpt!]
\caption{{\bf Illustrative example: $n = 5$ and $z = 3$.} Facilitating the understanding of our notation, we show an illustrative example for a fixed number of system cells $n = 5$ and a fixed number of cell fates $z = 3$. {\bf [A]} Patterning space $\Omega(\vec{Z})$. {\bf [B]} Probability heatmaps for uniform (left), empirical (center), and ideal (right) patterning spaces. {\bf [C]} Counting space $\Omega(\vec{N})$. {\bf [D]} Probability heatmaps for images under $T$ (Definition \ref{Definition_Transformation}) of uniform (left), empirical (center), and ideal (right) patterning spaces.}
\label{ChapC_Fig1}
\end{figure}
\clearpage

\begin{proposition} \textbf{Ideal Positional Information} $\overset{\star}{\mathrm{PI}}$. \begin{equation*} \overset{\star}{\mathrm{PI}} = 0 \textrm{.} \end{equation*} \label{Proposition_Ideal_Positional_Information} \end{proposition}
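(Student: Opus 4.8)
The plan is to exploit the decomposition of positional information given in Definition~\ref{Definition_Utility_Function}, namely $\mathrm{PI} = \mathrm{S}_{\mathrm{PAT}} - \mathrm{S}_{\mathrm{SCF}}$, and to observe that the two ideal entropies just computed are term-by-term identical. Concretely, Proposition~\ref{Proposition_Ideal_Patterning_Entropy} gives $\overset{\star}{\mathrm{S}}_{\mathrm{PAT}} = \log_{2}[n] - \frac{1}{n}\sum_{j=1}^{z}\overset{\star}{n}_{j}\log_{2}[\overset{\star}{n}_{j}]$, and Proposition~\ref{Proposition_Ideal_Spatial_Correlation_Free_Entropy} yields exactly the same right-hand side for $\overset{\star}{\mathrm{S}}_{\mathrm{SCF}}$. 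Substituting both into the decomposition immediately gives $\overset{\star}{\mathrm{PI}} = \overset{\star}{\mathrm{S}}_{\mathrm{PAT}} - \overset{\star}{\mathrm{S}}_{\mathrm{SCF}} = 0$, which is the claim.

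It is worth making explicit \emph{why} these two quantities coincide, since this is the entire content of the result. In the ideal patterning space the spatial-correlation-free distribution satisfies $\overset{\star}{P}_{Z=j,N=i} = \overset{\star}{n}_{j}/n$ independently of the position index $i$; that is, every cell location carries the same marginal fate distribution. Consequently the inner double sum defining $S[\overset{\star}{P}_{Z,N}]$ is merely $n$ identical copies of the single-position sum, and the prefactor $1/n$ in $\overset{\star}{\mathrm{S}}_{\mathrm{SCF}}$ cancels this multiplicity exactly, collapsing $\overset{\star}{\mathrm{S}}_{\mathrm{SCF}}$ onto $\overset{\star}{\mathrm{S}}_{\mathrm{PAT}}$. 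Intuitively, an ideal proportioning process encodes no position-dependent structure, so the local (positional) information it carries must vanish.

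There is no substantive obstacle here: the statement follows immediately once Propositions~\ref{Proposition_Ideal_Patterning_Entropy} and~\ref{Proposition_Ideal_Spatial_Correlation_Free_Entropy} are in hand. The only point requiring a moment of care is bookkeeping, namely confirming that the normalization factor $1/n$ appearing in Definition~\ref{Definition_Spatial_Correlation_Free_Entropy} is precisely what forces the equality; without it the two entropies would differ by a factor of $n$, and $\overset{\star}{\mathrm{PI}}$ would not vanish. Having verified this normalization, the result is complete and sets up the analogous computation of $\overset{\star}{\mathrm{CI}}$ and $\overset{\star}{\mathrm{U}}$ announced in the subsection heading.
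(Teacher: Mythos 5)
Your proof is correct and follows exactly the route the paper takes: write $\overset{\star}{\mathrm{PI}} = \overset{\star}{\mathrm{S}}_{\mathrm{PAT}} - \overset{\star}{\mathrm{S}}_{\mathrm{SCF}}$ via Definition~\ref{Definition_Utility_Function} and observe that Propositions~\ref{Proposition_Ideal_Patterning_Entropy} and~\ref{Proposition_Ideal_Spatial_Correlation_Free_Entropy} give identical expressions for the two ideal entropies. Your additional remarks on the position-independence of $\overset{\star}{P}_{Z=j,N=i}$ and the role of the $1/n$ normalization correctly restate the content already contained in the proof of Proposition~\ref{Proposition_Ideal_Spatial_Correlation_Free_Entropy}.
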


\begin{proof} By Definitions \ref{Definition_Ideal_Cell_Fate_Proportioning_Process} and \ref{Definition_Utility_Function}, $\overset{\star}{\mathrm{PI}} = \overset{\star}{\mathrm{S}}_{\mathrm{PAT}} - \overset{\star}{\mathrm{S}}_{\mathrm{SCF}}$. By Propositions \ref{Proposition_Ideal_Patterning_Entropy} and \ref{Proposition_Ideal_Spatial_Correlation_Free_Entropy}, $\overset{\star}{\mathrm{S}}_{\mathrm{PAT}} = \overset{\star}{\mathrm{S}}_{\mathrm{SCF}}$. Therefore, $\overset{\star}{\mathrm{PI}} = 0$. \end{proof}

\begin{proposition} \textbf{Ideal Reproducibility Entropy} $\overset{\star}{\mathrm{S}}_{\mathrm{REP}}$. \begin{equation*} \overset{\star}{\mathrm{S}}_{\mathrm{REP}} = \frac{1}{n}\log_{2}[n!] - \frac{1}{n}\sum_{j=1}^{z}\log_{2}[\overset{\star}{n}_{j}!] \textrm{.} \end{equation*} \label{Proposition_Ideal_Reproducibility_Entropy} \end{proposition}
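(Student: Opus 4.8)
The plan is to exploit the fact that, on the ideal patterning space, the reproducibility distribution $P_{\vec{Z}}$ is supported entirely on the single fiber $\overset{M}{\Omega}(\vec{Z}\vert\vec{N}=\overset{\star}{\vec{n}})$ and is uniform there. By Definition \ref{Definition_Ideal_Patterning_Space}, every $\vec{z} \in \overset{\star}{\Omega}(\vec{Z})$ carries the common probability $\prod_{j=1}^{z}\overset{\star}{n}_{j}!/n!$, whereas every $\vec{z}$ outside this fiber carries zero probability. Hence the computation reduces to evaluating the Shannon entropy of a uniform distribution over a support whose cardinality is already known from the multinomial count stated after Definition \ref{Definition_Pseudo_Inverse_Transformation}.

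Concretely, I would first substitute these probabilities into Definition \ref{Definition_Reproducibility_Entropy}. The out-of-fiber terms vanish under the convention $0\log_{2}0 = 0$, leaving a sum over the $K = \lvert\overset{M}{\Omega}(\vec{Z}\vert\vec{N}=\overset{\star}{\vec{n}})\rvert = n!/\prod_{j=1}^{z}\overset{\star}{n}_{j}!$ equiprobable patterns. Since the common per-pattern probability $\prod_{j=1}^{z}\overset{\star}{n}_{j}!/n!$ is exactly $1/K$, each term contributes $-\tfrac{1}{n}\cdot\tfrac{1}{K}\log_{2}(1/K)$, and the $K$ identical terms collapse to $\tfrac{1}{n}\log_{2}K$. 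Finally I would insert the explicit cardinality and use multiplicativity of the logarithm to expand $\log_{2}K = \log_{2}[n!] - \sum_{j=1}^{z}\log_{2}[\overset{\star}{n}_{j}!]$; dividing by $n$ yields precisely the claimed expression.

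There is no deep obstacle here: once the uniformity of the ideal distribution over its fiber is recognized, the result follows by direct substitution, and the proof is essentially the observation that the entropy of a uniform law on $K$ outcomes is $\log_{2}K$. The only points demanding care are the bookkeeping with the $0\log_{2}0 = 0$ convention used to discard the out-of-fiber contributions, and the verification that $\prod_{j=1}^{z}\overset{\star}{n}_{j}!/n!$ is genuinely the reciprocal of the fiber cardinality, so that the restricted distribution is both normalized and uniform as asserted in Definition \ref{Definition_Ideal_Patterning_Space}.
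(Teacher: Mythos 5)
Your proposal is correct and follows essentially the same route as the paper's own proof: both identify the ideal reproducibility distribution as uniform on the fiber $\overset{M}{\Omega}(\vec{Z}\vert\vec{N}=\overset{\star}{\vec{n}})$ with common probability $\prod_{j=1}^{z}\overset{\star}{n}_{j}!/n!$, collapse the entropy sum to $-\tfrac{1}{n}\log_{2}\bigl[\prod_{j=1}^{z}\overset{\star}{n}_{j}!/n!\bigr]$, and expand the logarithm. The only cosmetic difference is that you phrase the collapse via the fiber cardinality $K$ while the paper uses the normalization of the probabilities directly; these are the same computation.
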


\begin{proof} By Definitions \ref{Definition_Ideal_Cell_Fate_Proportioning_Process} and \ref{Definition_Reproducibility_Probability_Distribution}, \begin{equation*} \overset{\star}{P}_{\vec{Z}=\vec{z}} = \frac{\prod_{j=1}^{z}\overset{\star}{n}_{j}!}{n!} \quad \forall \vec{z} \in \overset{\star}{\Omega}(\vec{Z}) \textrm{.} \end{equation*} Recall that, by Definition \ref{Definition_Ideal_Patterning_Space}, $P(\vec{Z}=\vec{z}) = 0$ for all $\vec{z} \in \Omega(\vec{Z}) \setminus \overset{\star}{\Omega}(\vec{Z})$. By Definition \ref{Definition_Reproducibility_Entropy}, \begin{equation*} \begin{gathered} \overset{\star}{\mathrm{S}}_{\mathrm{REP}} \doteq \frac{1}{n} S\left[\overset{\star}{P}_{\vec{Z}}\right] = -\frac{1}{n} \sum_{\vec{z} \in \overset{\star}{\Omega}(\vec{Z})} \overset{\star}{P}_{\vec{Z}=\vec{z}} \log_{2}[\overset{\star}{P}_{\vec{Z}=\vec{z}}] = \\ = -\frac{1}{n} \sum_{\vec{z} \in \overset{\star}{\Omega}(\vec{Z})} \frac{\prod_{j=1}^{z}\overset{\star}{n}_{j}!}{n!} \log_{2}\left[\frac{\prod_{j=1}^{z}\overset{\star}{n}_{j}!}{n!}\right] = \\ = -\frac{1}{n} \log_{2}\left[\frac{\prod_{j=1}^{z}\overset{\star}{n}_{j}!}{n!}\right] = \\ = -\frac{1}{n} \left[ \sum_{j=1}^{z}\log_{2}[\overset{\star}{n}_{j}!] - \log_{2}[n!] \right] = \\ = \frac{1}{n}\log_{2}[n!] - \frac{1}{n}\sum_{j=1}^{z}\log_{2}[\overset{\star}{n}_{j}!] \textrm{.} \end{gathered} \end{equation*} \end{proof}

\begin{proposition} \textbf{Ideal Correlational Information} $\overset{\star}{\mathrm{CI}}$. \begin{equation*} \overset{\star}{\mathrm{CI}} = \log_{2}[n] - \frac{1}{n}\log_{2}[n!] + \frac{1}{n} \left[ \sum_{j=1}^{z}\log_{2}[\overset{\star}{n}_{j}!] - \sum_{j=1}^{z}\overset{\star}{n}_{j}\log_{2}[\overset{\star}{n}_{j}] \right] \textrm{.} \end{equation*} \label{Proposition_Ideal_Correlational_Information} \end{proposition}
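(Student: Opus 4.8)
The plan is to prove this by direct substitution, because the correlational information is \emph{defined} in Definition~\ref{Definition_Utility_Function} as the difference $\mathrm{CI} = \mathrm{S}_{\mathrm{SCF}} - \mathrm{S}_{\mathrm{REP}}$, and closed-form expressions for both ideal entropies are already in hand. First I would invoke Definition~\ref{Definition_Utility_Function}, specialized to an ideal cell-fate proportioning process, to write $\overset{\star}{\mathrm{CI}} = \overset{\star}{\mathrm{S}}_{\mathrm{SCF}} - \overset{\star}{\mathrm{S}}_{\mathrm{REP}}$. This reduces the whole statement to assembling two previously derived quantities.

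Next I would substitute the expression for $\overset{\star}{\mathrm{S}}_{\mathrm{SCF}}$ from Proposition~\ref{Proposition_Ideal_Spatial_Correlation_Free_Entropy}, namely $\log_{2}[n] - \frac{1}{n}\sum_{j=1}^{z}\overset{\star}{n}_{j}\log_{2}[\overset{\star}{n}_{j}]$, together with the expression for $\overset{\star}{\mathrm{S}}_{\mathrm{REP}}$ from Proposition~\ref{Proposition_Ideal_Reproducibility_Entropy}, namely $\frac{1}{n}\log_{2}[n!] - \frac{1}{n}\sum_{j=1}^{z}\log_{2}[\overset{\star}{n}_{j}!]$. Distributing the minus sign over the second term turns $\overset{\star}{\mathrm{S}}_{\mathrm{REP}}$ into $-\frac{1}{n}\log_{2}[n!] + \frac{1}{n}\sum_{j=1}^{z}\log_{2}[\overset{\star}{n}_{j}!]$, so the raw difference reads $\log_{2}[n] - \frac{1}{n}\sum_{j=1}^{z}\overset{\star}{n}_{j}\log_{2}[\overset{\star}{n}_{j}] - \frac{1}{n}\log_{2}[n!] + \frac{1}{n}\sum_{j=1}^{z}\log_{2}[\overset{\star}{n}_{j}!]$. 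I would then collect the two summation terms under a single bracketed factor $\frac{1}{n}[\,\cdots\,]$, placing the factorial-sum term first and the $\overset{\star}{n}_{j}\log_{2}[\overset{\star}{n}_{j}]$ term second, which reproduces the stated identity verbatim.

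Since every ingredient is already established, there is no genuine analytic obstacle here; the proof is essentially a one-line algebraic consolidation. The only point requiring care is sign bookkeeping when distributing the minus sign across $\overset{\star}{\mathrm{S}}_{\mathrm{REP}}$ (so that the factorial-sum enters with a plus and the $-\frac{1}{n}\log_{2}[n!]$ term survives with a minus), and arranging the terms so the final grouping matches the claimed layout. No asymptotic simplification (e.g.\ a Stirling expansion of $\log_{2}[n!]$) is needed at this stage, so I would leave the answer in this exact closed form, which is precisely what the downstream analysis of $\overset{\star}{\mathrm{CI}}$ as a function of $n$ will build upon.
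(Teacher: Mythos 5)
Your proof is correct and follows exactly the paper's own route: invoke Definition~\ref{Definition_Utility_Function} (with Definition~\ref{Definition_Ideal_Cell_Fate_Proportioning_Process}) to write $\overset{\star}{\mathrm{CI}} = \overset{\star}{\mathrm{S}}_{\mathrm{SCF}} - \overset{\star}{\mathrm{S}}_{\mathrm{REP}}$, then substitute Propositions~\ref{Proposition_Ideal_Spatial_Correlation_Free_Entropy} and~\ref{Proposition_Ideal_Reproducibility_Entropy} and regroup terms. The sign bookkeeping and final grouping you describe reproduce the stated identity verbatim, so nothing is missing.
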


\begin{proof} By Definitions \ref{Definition_Ideal_Cell_Fate_Proportioning_Process} and \ref{Definition_Utility_Function}, $\overset{\star}{\mathrm{CI}} = \overset{\star}{\mathrm{S}}_{\mathrm{SCF}} - \overset{\star}{\mathrm{S}}_{\mathrm{REP}}$. By applying Propositions \ref{Proposition_Ideal_Spatial_Correlation_Free_Entropy} and \ref{Proposition_Ideal_Reproducibility_Entropy}, we obtain the desired result. \end{proof}

\begin{proposition} \textbf{Ideal Utility} $\overset{\star}{\mathrm{U}}$. \begin{equation*} \overset{\star}{\mathrm{U}} = \log_{2}[n] - \frac{1}{n}\log_{2}[n!] + \frac{1}{n} \left[ \sum_{j=1}^{z}\log_{2}[\overset{\star}{n}_{j}!] - \sum_{j=1}^{z}\overset{\star}{n}_{j}\log_{2}[\overset{\star}{n}_{j}] \right] = \overset{\star}{\mathrm{CI}} \textrm{.} \end{equation*} \label{Proposition_Ideal_Utility} \end{proposition}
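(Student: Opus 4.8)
The plan is to recognize that this proposition is essentially a corollary combining the two immediately preceding results, so the argument requires no fresh computation. By Definition~\ref{Definition_Utility_Function} specialized to the ideal cell-fate proportioning process (Definition~\ref{Definition_Ideal_Cell_Fate_Proportioning_Process}), the ideal utility decomposes as $\overset{\star}{\mathrm{U}} = \overset{\star}{\mathrm{PI}} + \overset{\star}{\mathrm{CI}}$. My first step would be to invoke Proposition~\ref{Proposition_Ideal_Positional_Information}, which establishes $\overset{\star}{\mathrm{PI}} = 0$; substituting this into the decomposition collapses the utility to the correlational information alone, giving $\overset{\star}{\mathrm{U}} = \overset{\star}{\mathrm{CI}}$. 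This is the conceptual heart of the statement: for a pure proportioning process all of the self-organization captured by $\mathrm{U}$ resides in the global correlational term, precisely because the local positional term vanishes by the symmetry of the spatial-correlation-free construction (the ideal $\mathrm{S}_{\mathrm{PAT}}$ and $\mathrm{S}_{\mathrm{SCF}}$ coincide).

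My second step would be to substitute the closed form for $\overset{\star}{\mathrm{CI}}$ supplied by Proposition~\ref{Proposition_Ideal_Correlational_Information}, namely $\overset{\star}{\mathrm{CI}} = \log_{2}[n] - \frac{1}{n}\log_{2}[n!] + \frac{1}{n}\left[\sum_{j=1}^{z}\log_{2}[\overset{\star}{n}_{j}!] - \sum_{j=1}^{z}\overset{\star}{n}_{j}\log_{2}[\overset{\star}{n}_{j}]\right]$, which reproduces the claimed expression verbatim and closes the argument. No algebraic manipulation beyond this substitution is needed, since the formula for $\overset{\star}{\mathrm{CI}}$ was already placed in simplified form.

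There is no genuine obstacle here: the difficulty has been fully absorbed into Propositions~\ref{Proposition_Ideal_Positional_Information} and~\ref{Proposition_Ideal_Correlational_Information}, which themselves rest on the entropy evaluations in Propositions~\ref{Proposition_Ideal_Patterning_Entropy}, \ref{Proposition_Ideal_Spatial_Correlation_Free_Entropy}, and~\ref{Proposition_Ideal_Reproducibility_Entropy}. The one point worth flagging is interpretive rather than technical: this result is best read as a summary statement rather than an independent theorem. Its value lies in making explicit that $\overset{\star}{\mathrm{U}}$ depends only on $n$ and the ideal counts $\overset{\star}{n}_{j}$ (equivalently, on the target proportions), which is exactly what sets up the monotonicity-in-$n$ claim for $\overset{\star}{\mathrm{CI}}$ promised earlier in the paper.
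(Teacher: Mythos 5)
Your proposal is correct and follows essentially the same route as the paper's own proof: decompose $\overset{\star}{\mathrm{U}} = \overset{\star}{\mathrm{PI}} + \overset{\star}{\mathrm{CI}}$ via Definitions \ref{Definition_Ideal_Cell_Fate_Proportioning_Process} and \ref{Definition_Utility_Function}, then substitute Propositions \ref{Proposition_Ideal_Positional_Information} and \ref{Proposition_Ideal_Correlational_Information}. Your interpretive remark that this is a summary corollary rather than an independent result is accurate.
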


\begin{proof} By Definitions \ref{Definition_Ideal_Cell_Fate_Proportioning_Process} and \ref{Definition_Utility_Function}, $\overset{\star}{\mathrm{U}} = \overset{\star}{\mathrm{PI}} + \overset{\star}{\mathrm{CI}}$. By applying Propositions \ref{Proposition_Ideal_Positional_Information} and \ref{Proposition_Ideal_Correlational_Information}, we obtain the desired result. \end{proof}

The final step of our $\mathrm{S}_{\mathrm{REP}}$ estimation strategy requires gathering knowledge about the underlying counting space of a given cell-fate proportioning process. This final step is flexible: it can exploit experimental or synthetic (simulation) data to approximate the underlying counting probability space.

To exemplify our complete strategy, we employ a concrete simulation dataset. In \cite{ramirez-sierra_ai-powered_2024}, we generated simulation data for the analysis of multiple ITWT system scenarios corresponding to 13 distinct inner cell mass ``ICM'' sizes (number of system cells): $n \in \{5,10,15,25,35,50,65,75,85,100,150,225,400\}$; recall that $z = 3$. We leveraged this simulation dataset to compute the empirical patterning, spatial-correlation-free, and reproducibility entropies for all the available ITWT system scenarios. For each ICM size, to compute the empirical reproducibility entropy, we created a list of all the realizable elements of the respective (sample) counting space, estimating their probabilities based on the observed counting vectors. Following this estimation of the counting probability space, we applied Definition \ref{Definition_Pseudo_Inverse_Transformation} (pseudo-inverse transform $M$) for assigning probability estimates to patterning vector sets, thus obtaining an approximation of the respective patterning probability space.

As seen in Fig~\ref{ChapC_Fig2}[A, B, C], we compare the empirical (solid lines) patterning, spatial-correlation-free, and reproducibility entropies against their ideal (dashed lines) analogs. For the empirical case, we observe that $\mathrm{S}_{\mathrm{PAT}}$ (dark orchid), $\mathrm{S}_{\mathrm{SCF}}$ (slate gray), and $\mathrm{S}_{\mathrm{REP}}$ (dark turquoise) are almost equal irrespective of $n$. For the ideal case, we observe that, while $\mathrm{S}_{\mathrm{PAT}}$ and $\mathrm{S}_{\mathrm{SCF}}$ are constant with respect to $n$, $\mathrm{S}_{\mathrm{REP}}$ rapidly approaches its upper asymptote with increasing $n$ (see Fig~\ref{ChapC_Fig2}[C]).

In Fig~\ref{ChapC_Fig2}[D], we also compare the positional-correlational information measures alongside the utility function for the empirical and ideal cases. As proposed for the ideal case, $\mathrm{PI} = 0$ irrespective of $n$, while $\mathrm{U} = \mathrm{CI}$ rapidly approaches its lower asymptote with increasing $n$. The observed shape of the ideal correlational information (with respect to $n$) indicates that the ideal utility is a strictly monotonically decreasing function of $n$. For the empirical case, although the utility is low and rapidly diminishes with growing $n$ (it ranges from around 0.007 to around 0.001 bits), we observe that it never fully vanishes. This finding is interesting because such a low utility suggests that the ITWT system exhibits a behavior which is nowhere near optimal. However, this finding partially counters the results of \cite{ramirez-sierra_ai-powered_2024}, where we identified for the ITWT system a highly accurate and precise behavior, revealing a robust and reproducible cell-fate proportioning process. Moreover, we also determined that the precision of the cell-fate distributions increases with growing $n$, whereas their accuracy remains constant with respect to $n$. Altogether, we propose that the current view of optimal behavior for cell-fate proportioning processes is incomplete, and it is necessary to reinterpret optimality within this context; see Discussion section for details.

Complementarily, we calculated the time series reflecting the temporal evolution of all the empirical quantities ($\mathrm{S}_{\mathrm{PAT}}$, $\mathrm{S}_{\mathrm{SCF}}$, $\mathrm{S}_{\mathrm{REP}}$, $\mathrm{PI}$, $\mathrm{CI}$, and $\mathrm{U}$) for the ITWT system scenarios corresponding to $n = 5$ (see Fig~\ref{ChapC_Fig3}) as well as $n = 25$ (see Appendix Fig~\ref{ApexC_Fig3}); we kept their ideal analogs as guiding references. We observe that all the considered empirical quantities are almost equal irrespective of the time point: there is a rapid increase of entropy caused by a quick surge of the EPI cell population, which is followed by a slow but steady entropy increase due to the gradual specification of the PRE cell population. Moreover, the maximum entropy is reached around the 24-hour mark, and after this peak we observe a progressive entropy decrease until the last available time point (see panels [A, B, C] of Fig~\ref{ChapC_Fig3} and Appendix Fig~\ref{ApexC_Fig3}). This observation suggests that, during the last half of the target developmental period, the UND cell population experiences a refinement until it almost vanishes. The empirical patterning entropy finalizes with approximately 0.2 ($n = 5$) and 0.1 ($n = 25$) bits more (better) than its ideal analog, but the empirical reproducibility entropy finalizes with approximately 0.5 ($n = 5$) and 0.25 ($n = 25$) bits more (worse) than its ideal analog. As shown in panel [D] of Fig~\ref{ChapC_Fig3} and Appendix Fig~\ref{ApexC_Fig3}, the empirical utility increases over time, but it only achieves around a fifth of the ideal utility (for both considered ITWT system scenarios).

\begin{adjustwidth}{0in}{0in}
\includegraphics[width = 5.25in, height = 5.25in]{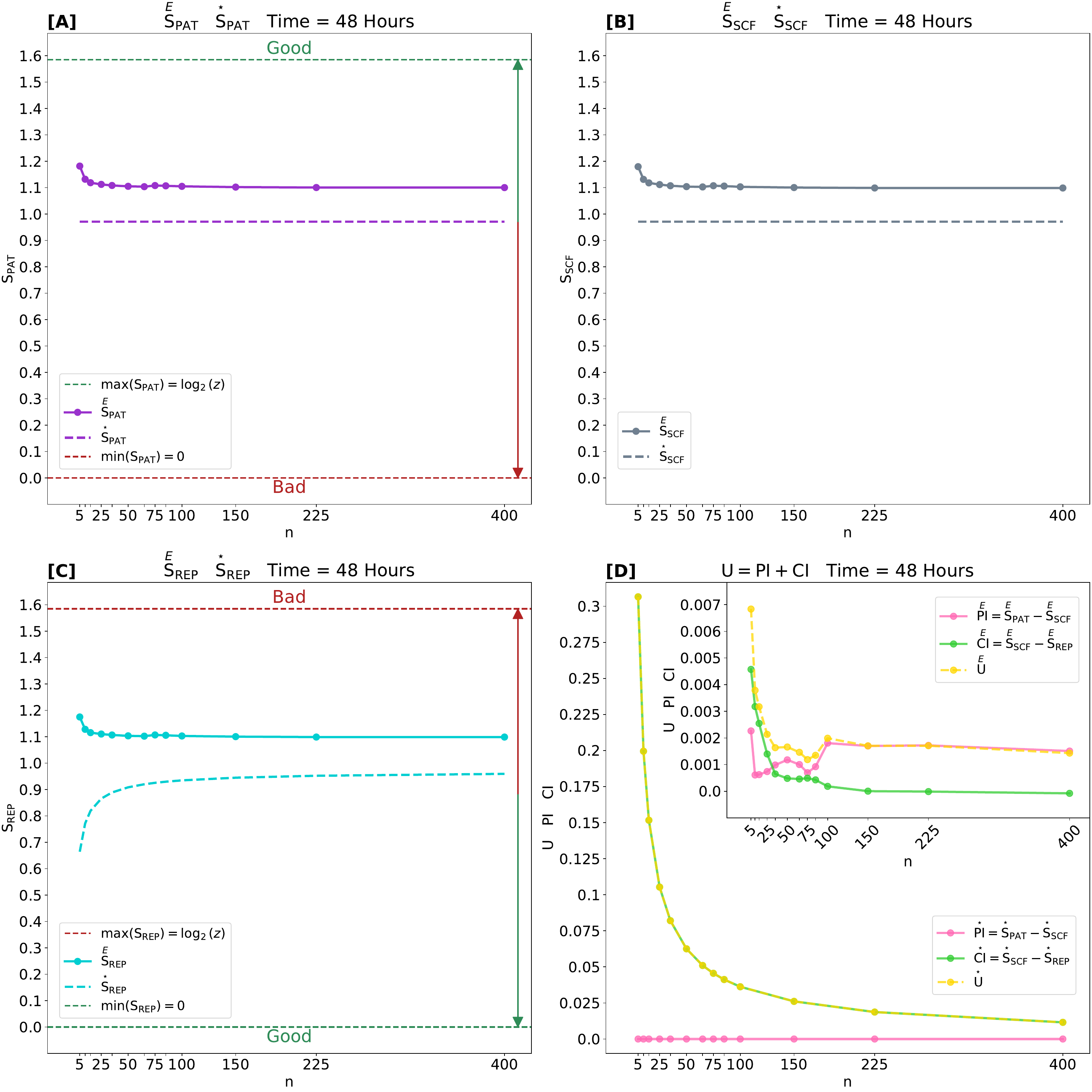} \centering 
\end{adjustwidth}
\begin{figure}[hpt!]
\caption{{\bf Comparing empirical information-theoretical measures against their ideal analogs (with respect to ICM size): ITWT system.} Time = 48 Hours. Note that green (good) and red (bad) signs indicate better and worse performance, respectively. {\bf [A]} Measure of patterning entropy. {\bf [B]} Measure of spatial-correlation-free entropy. {\bf [C]} Measure of reproducibility entropy. {\bf [D]} Measure of utility: positional information plus correlational information. The main plot shows the ideal measures. The inset shows the empirical measures.}
\label{ChapC_Fig2}
\end{figure}

\begin{adjustwidth}{0in}{0in}
\includegraphics[width = 5.9in, height = 5.9in]{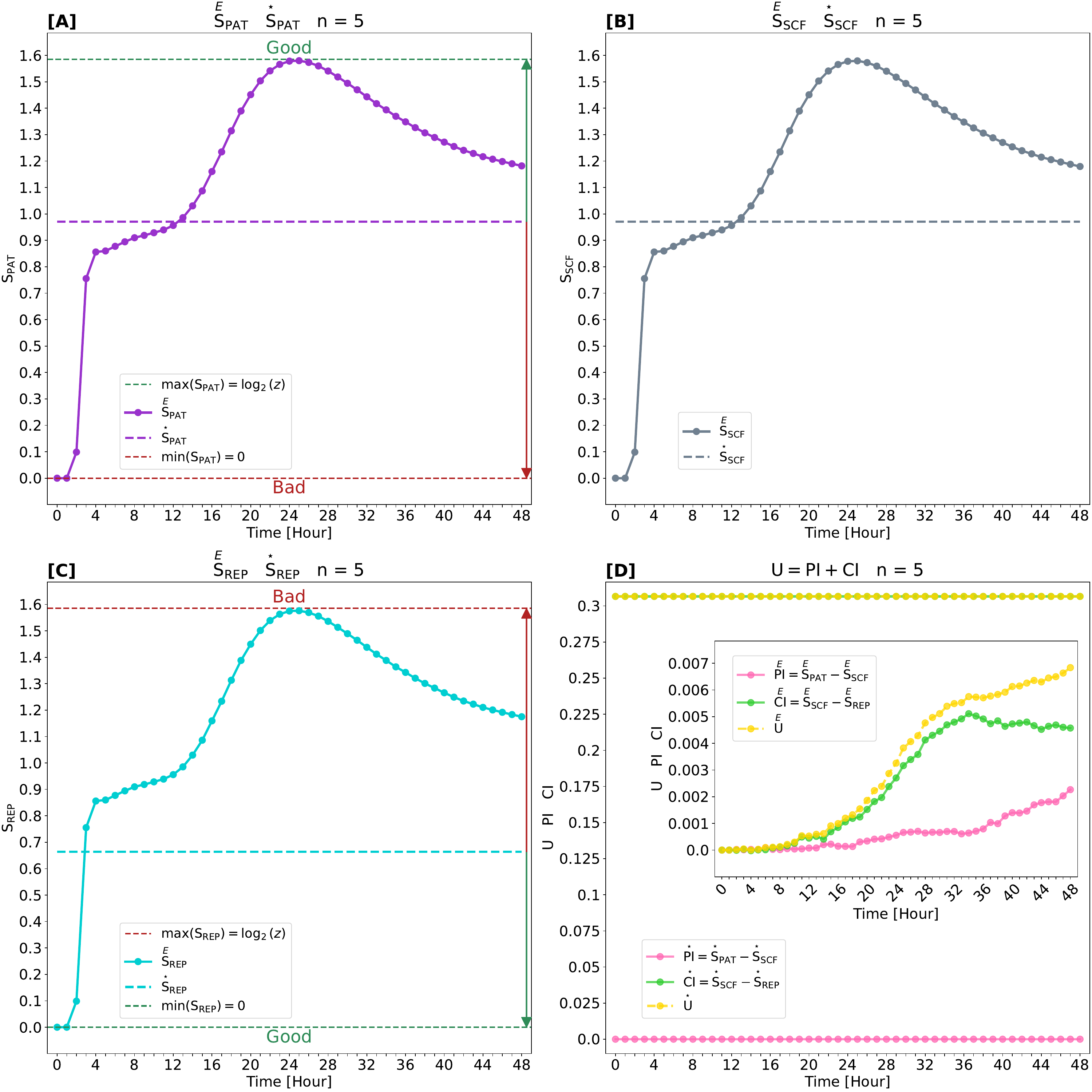} \centering 
\end{adjustwidth}
\begin{figure}[hpt!]
\caption{{\bf Comparing empirical information-theoretical measures against their ideal analogs (with respect to simulation time point): ITWT system.} Number of system cells $n = 5$. Note that green (good) and red (bad) signs indicate better and worse performance, respectively. {\bf [A]} Measure of patterning entropy. {\bf [B]} Measure of spatial-correlation-free entropy. {\bf [C]} Measure of reproducibility entropy. {\bf [D]} Measure of utility: positional information plus correlational information. The main plot shows the ideal measures. The inset shows the empirical measures.}
\label{ChapC_Fig3}
\end{figure}

\subsection*{\texorpdfstring{An alternative measure for quantifying reproducibility entropy ($\mathrm{W}_{\mathrm{REP}}$)}{An alternative measure for quantifying reproducibility entropy (\textrm{W}\_{\textrm{REP}})}}

Arguably, for any given cell-fate proportioning process (which is not necessarily ideal in the sense portrayed by our work), the main objective is to generate cell-fate counting distributions that are not only consistent across replicates of developmental pattern ensembles, but also reproducible under different environmental conditions and robust to biological noise or systematic perturbations. As such, we believe that it is advantageous to consider a (simple) alternative measure for quantifying the information content of a cell-fate proportioning process. This proposal for an alternative measure is formalized in the following definition.

\begin{definition} \textbf{Reproducibility Entropy (Counting Space)} $\mathrm{W}_{\mathrm{REP}}$. Let \begin{equation*} P_{\vec{N}} \doteq P(\vec{N}=\vec{n}) \quad \forall \vec{n} \in \Omega(\vec{N}) \textrm{.} \end{equation*} The counting probability distribution $P_{\vec{N}}$ dictates the probability of observing realization (cell-fate count) $\vec{n} \in \Omega(\vec{N})$. For convenience, $P_{\vec{N}}$ is equivalent to $P(\vec{N})$; i.e., $P_{\vec{N}} \equiv P(\vec{N})$. Let \begin{equation*} \mathrm{W}_{\mathrm{REP}} \doteq \frac{1}{n} S\left[P_{\vec{N}}\right] = -\frac{1}{n} \sum_{\vec{n} \in \Omega(\vec{N})} P(\vec{N}=\vec{n}) \log_{2}[P(\vec{N}=\vec{n})] \textrm{.} \end{equation*} In words, $\mathrm{W}_{\mathrm{REP}}$ measures the cell-fate counting distribution reproducibility over an ensemble of developmental patterns. It is easy to see that, if $\exists ! \vec{n} \in \Omega(\vec{N})$ such that $P(\vec{N}=\vec{n}) = 1$, then $\min(\mathrm{W}_{\mathrm{REP}}) = 0$. Likewise, if $P(\vec{N}=\vec{n}) = 2/((n+1)(n+2))$ for all $\vec{n} \in \Omega(\vec{N})$, then $\max(\mathrm{S}_{\mathrm{REP}}) = (\log_{2}[n+1]+\log_{2}[n+2]-1)/n$. \label{Definition_Reproducibility_Entropy_Counting_Space} \end{definition}

By applying Definition \ref{Definition_Reproducibility_Entropy_Counting_Space} to all the distinct ITWT system scenarios described previously, 13 distinct ICM sizes, we obtained quantifications of the reproducibility entropy $\mathrm{W}_{\mathrm{REP}}$ for five different cases: (1) a counting space which is the image under $T$ of the uniform patterning space; (2) a uniform counting space; (3) a counting space which is constructed using a collection of samples from a multinomial distribution; (4) an empirical counting space; (5) an ideal counting space. Cases one and two, $T(\overset{U}{\Omega}(\vec{Z}))$ and $\overset{U}{\Omega}(\vec{N})$, are guiding references. Case three, $\overset{K}{\Omega}(\vec{N})$, represents an artificially generated system producing cell-fate counting distributions where the cell-fate counts are sampled from a multinomial distribution with exactly the same proportions as the observed simulation data (for all the simulated times). Case four, $\overset{E}{\Omega}(\vec{N})$, is our prime purpose. Case five, $\overset{\star}{\Omega}(\vec{N})$, is a trivial guiding reference.

In Fig~\ref{ChapC_Fig4}, we see that, for all the five different $\mathrm{W}_{\mathrm{REP}}$ cases, the entropy decreases (improves) consistently with growing $n$ (ITWT system scenario or size). In particular for the artificial counting probability space associated with a multinomial distribution, its counting reproducibility entropy is systematically higher (worse) than its empirical counterpart, although this difference is modest: the maximum gap is around 0.05 bits when $n = 5$, corresponding to approximately 5\% of the highest entropy which is dictated by the uniform counting space. This finding recapitulates a result of \cite{ramirez-sierra_ai-powered_2024}, where we identified that a multinomial formalism for the so-called ``salt-and-pepper'' patterning process produces cell-fate counting distributions almost identical to the simulated ITWT system behavior realizations. Focusing on the maximum relative difference (when $n = 5$) between the multinomial and empirical cases, we find that the empirical counting reproducibility entropy is lower (better) than its multinomial counterpart by around 10\%. This finding parallels another result of \cite{ramirez-sierra_ai-powered_2024}, where we identified that the cell-fate counting process for the ITWT system exhibits a coefficient of variation which is around 10\% lower (better) than the expected measure for a cell-fate decisioning process with a simple binomial noise assumption, regardless of ICM size. Altogether, we believe that the counting reproducibility entropy $\mathrm{W}_{\mathrm{REP}}$ aligns with the optimality intuitions emerging from \cite{ramirez-sierra_ai-powered_2024} and \cite{ramirez_sierra_comparing_2025} accompanying this study.

Moreover, as shown in Fig~\ref{ChapC_Fig5} ($n = 5$) and Appendix Fig~\ref{ApexC_Fig4} ($n = 25$), we also calculated the temporal dynamics for all the five different preceding cases. Overall, we see that the empirical counting reproducibility entropy $\mathrm{W}_{\mathrm{REP}}$ tracks reasonably well the empirical patterning reproducibility entropy $\mathrm{S}_{\mathrm{REP}}$ (Fig~\ref{ChapC_Fig3} and Appendix Fig~\ref{ApexC_Fig3}): both empirical quantities exhibit a similar time series shape, and appear to be rescaled versions of one another. Interestingly, a clear bound arises from this analysis: neither multinomial nor empirical quantities surpass the entropy dictated by a counting space which is the image under $T$ of the uniform patterning space. This observation can be explained by thinking in terms of Definitions \ref{Definition_Transformation} and \ref{Definition_Pseudo_Inverse_Transformation}. In this way, to generate a uniform counting space, any given cell-fate proportioning process must bias its patterning space such that the realized developmental pattern ensembles produce uniformly distributed cell-fate counting vectors. But such a biased process, in view of Definitions \ref{Definition_Transformation} and \ref{Definition_Pseudo_Inverse_Transformation}, cannot properly reflect any natural cell-fate proportioning mechanism because it will imply that the underlying patterning and counting space structures are known to the developmental system. In other words, given that the cardinality of the set of patterning vectors generating a particular counting vector is dictated by each unique realizable counting vector, no natural cell-fate proportioning process can realistically produce a uniform counting space. As such, the biologically feasible maximum counting entropy corresponds with the maximum patterning entropy.

\begin{adjustwidth}{0in}{0in}
\includegraphics[width = 5.25in, height = 5.25in]{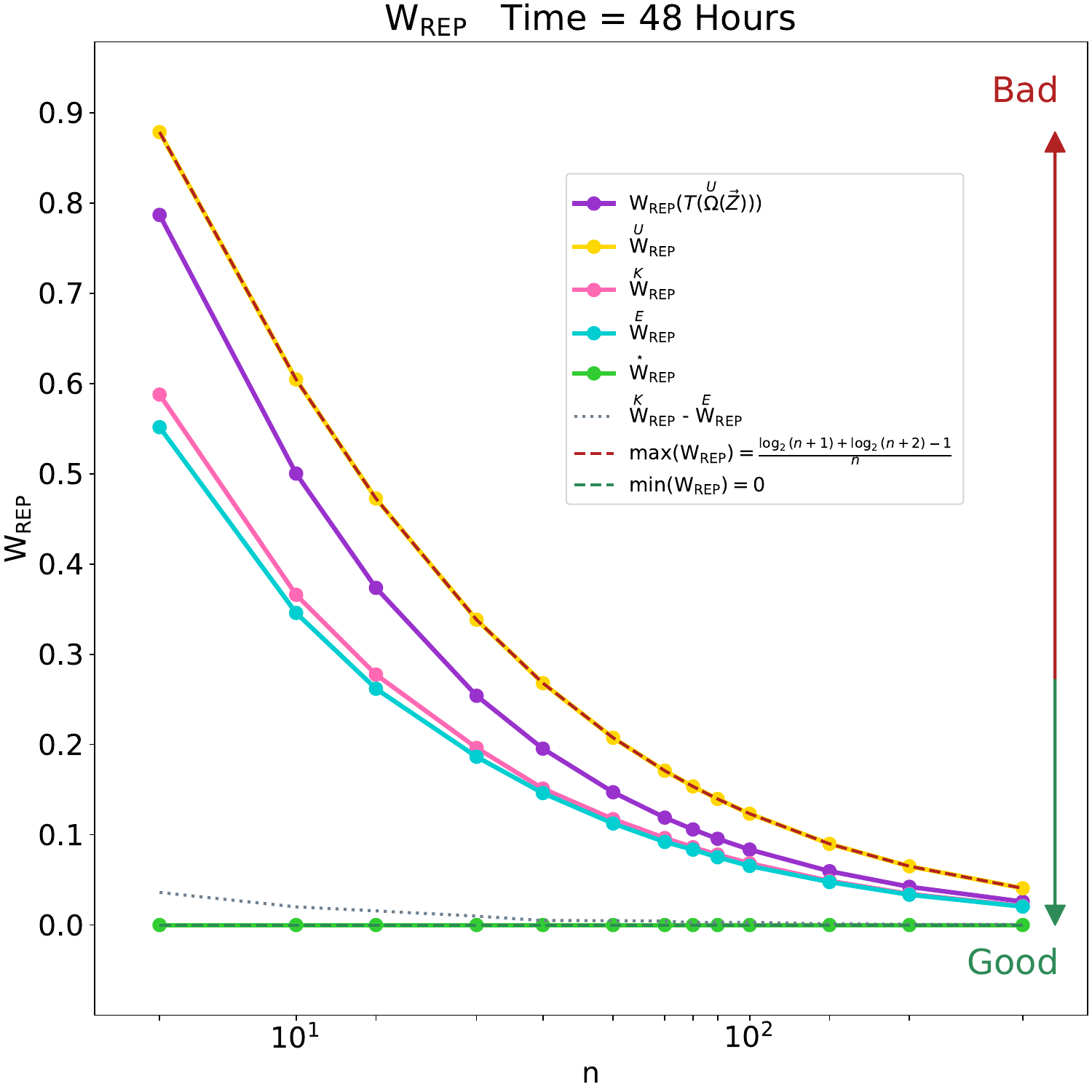} \centering 
\end{adjustwidth}
\begin{figure}[hpt!]
\caption{{\bf Reproducibility entropy (counting space) with respect to ICM size: $\mathrm{W}_{\mathrm{REP}}$.} Time = 48 Hours. Quantifications of reproducibility entropy $\mathrm{W}_{\mathrm{REP}}$ for five different cases: (dark orchid) image under $T$ (Definition \ref{Definition_Transformation}) of uniform patterning space; (gold) uniform counting space; (hot pink) multinomial counting space; (dark turquoise) empirical counting space; (lime green) ideal counting space. Note that green (good) and red (bad) signs indicate better and worse performance, respectively. The dotted (slate gray) line represents the difference between multinomial and empirical counting cases. The dashed (dark red and dark green) lines represent the maximum and minimum entropies, respectively.}
\label{ChapC_Fig4}
\end{figure}
\clearpage

\begin{adjustwidth}{0in}{0in}
\includegraphics[width = 5.9in, height = 5.9in]{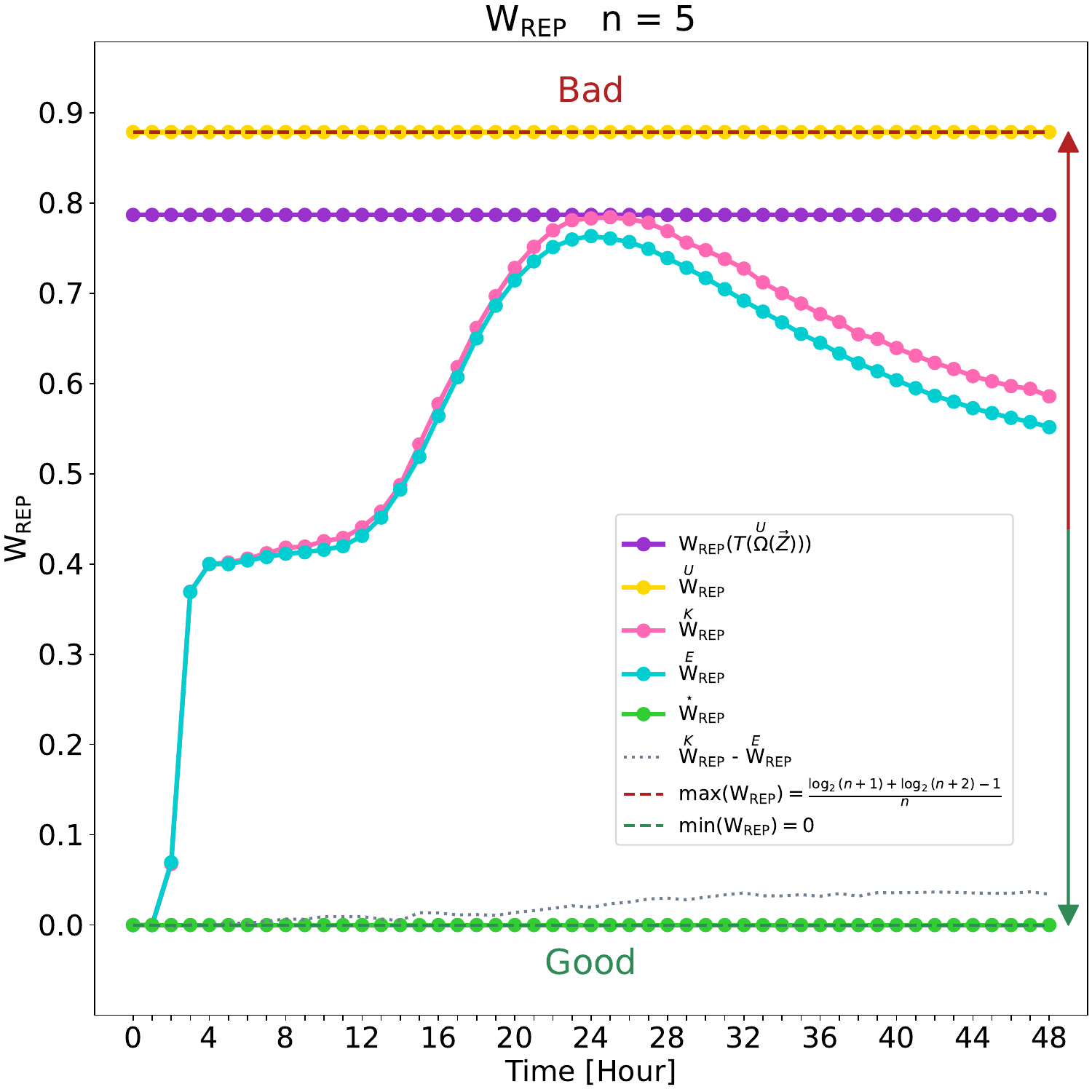} \centering 
\end{adjustwidth}
\begin{figure}[hpt!]
\caption{{\bf Reproducibility entropy (counting space) with respect to simulation time point: $\mathrm{W}_{\mathrm{REP}}$.} Number of system cells $n = 5$. Quantifications of reproducibility entropy $\mathrm{W}_{\mathrm{REP}}$ for five different cases: (dark orchid) image under $T$ (Definition \ref{Definition_Transformation}) of uniform patterning space; (gold) uniform counting space; (hot pink) multinomial counting space; (dark turquoise) empirical counting space; (lime green) ideal counting space. Note that green (good) and red (bad) signs indicate better and worse performance, respectively. The dotted (slate gray) line represents the difference between multinomial and empirical counting cases. The dashed (dark red and dark green) lines represent the maximum and minimum entropies, respectively.}
\label{ChapC_Fig5}
\end{figure}
\clearpage




\section*{Discussion}

Developmental processes such as the blastocyst formation are fascinating and paradigmatic examples of self-organization \cite{bruckner_information_2024, plusa_common_2020, schroter_local_2023, zhu_principles_2020, morales_embryos_2021}. These self-organized processes are capable not only of generating highly reproducible and robust cell-fate specification behavior starting from homogeneous cellular populations (while under noisy biological conditions) \cite{schroter_local_2023}, but also of synergistically maintaining a high cellular self-renewal capacity over restricted developmental periods \cite{plusa_common_2020}. As such, a formal-yet-generic notion of self-organization is fundamentally required for uncovering the general principles that govern the emergence of complexity for developmental systems. However, until recently, there was no universal frame of reference for quantifying the self-organization capability (information content) of developmental biology systems like the blastocyst; any available option was tailored to the qualitative properties of a particular developmental mechanism.

In this study, applying a newly proposed information-theoretical framework for quantifying the self-organization potential of cell-fate patterning processes \cite{bruckner_information_2024}, we studied the information content of our simulated Inferred-Theoretical Wild-Type ``ITWT'' system (see \cite{ramirez-sierra_ai-powered_2024}). More explicitly, we devised a mathematical and computational strategy for calculating the patterning (Definition \ref{Definition_Patterning_Entropy}), spatial-correlation-free (Definition \ref{Definition_Spatial_Correlation_Free_Entropy}), and reproducibility (Definition \ref{Definition_Reproducibility_Entropy}) entropies, which are proposed in the study by Brückner and Tkačik \cite{bruckner_information_2024}, for our cell-fate proportioning (ITWT) system. Importantly, this strategy overcomes the most important difficulty of the newly proposed quantification framework: the calculation of the reproducibility entropy necessitates the estimation of a high dimensional probability space, whose cardinality grows exponentially as a function of $n$ (number of system cells) and $z$ (number of cell fates): $z^{n}$.

By creating an (indirect) mapping from cell-fate counting probability space to cell-fate patterning probability space, we estimated the self-organization potential of our ITWT system. Within this framework \cite{bruckner_information_2024}, we found that our cell-fate proportioning system exhibits a low empirical utility (see Definition \ref{Definition_Utility_Function}), which comparatively only reaches approximately a fifth of the ideal utility (see Definition \ref{Definition_Ideal_Cell_Fate_Proportioning_Process}). This finding partially contradicts the intuitions and results of \cite{ramirez-sierra_ai-powered_2024}, where we observed that the ITWT system generates highly accurate cell-fate counting distributions whose precision increases with growing $n$ ($z = 3$). This contradiction suggests that, to complete our view of cell-fate proportioning processes, it should be beneficial to alternatively (or complementarily) consider an entropy measure for the available counting space. Instead of directly assessing the information content of cell-fate patterning distributions for any given ensemble of developmental patterns, we should quantify the entropy of the realized cell-fate counting vectors. To this end, we proposed a simple measure of reproducibility entropy for any given ensemble of developmental (cell-fate) counts; see Defintion \ref{Definition_Reproducibility_Entropy_Counting_Space}. This alternative measure consistently recapitulates our previous findings where we observed a highly reproducible and robust cell-fate specification behavior for the ITWT system: the alternative reproducibility entropy decreases (improves) with growing $n$ ($z = 3$), and its empirical time series shows that the ITWT system performs better (although marginally) than an artificial (less complex) system whose sample space is dictated by a multinomial distribution; i.e., an artificial system without cell-cell signaling or communication.

Although our measure of counting reproducibility is simple and computationally tractable, it should be viewed not only as an alternative to but also as an extension of the framework recently proposed by Brückner and Tkačik \cite{bruckner_information_2024}. The formal relationships that we have constructed between patterning spaces and counting spaces lay the foundation for improving our understanding of cell-fate spatial patterning and cell-fate count proportioning from multiple perspectives. While these processes might appear equivalent, they actually reflect two complementary aspects of the same developmental mechanism: cell-fate spatial patterning targets the timely formation of a spatially coordinated arrangement of cell fates without imposing count constraints; cell-fate count proportioning targets the timely formation of a replicable counting ratio of cell fates without imposing spatial constraints. Recognizing this distinction is crucial for appropriately contextualizing both experimental and simulation results, as these complementary processes demand distinct conditions for proper characterization. More explicitly, the cardinality of the patterning space is often significantly greater than the cardinality of its corresponding counting space, except for trivial cases. This difference suggests that efficient approximations of patterning probability spaces could be derived from available data on counting spaces, thereby accelerating the computation of the proposed information-theoretical measures, as demonstrated by our work. These approximations could be refined by incorporating rich structural constraints within the underlying spaces and establishing strategies to filter biologically feasible cell-fate pattern distributions based on observed cell-fate counting distributions.

This broad information-theoretical framework for quantifying self-organization potential has a promisingly high applicability across a vast range of biological systems. It offers a universal and generic approach to evaluate the inherent functional roles played by the main components of any developmental mechanism, including interactions within gene regulatory network motifs. By formalizing the criteria for self-organization and adopting such a versatile framework, we could enhance the identification of optimal parameter regimes and advance the creation of highly detailed mechanistic models, which are essential for disentangling the key drivers of correct development.





\bibliography{Manuscript_Catalog.bib}

\newpage


\section*{Acknowledgments}

The successful completion of this research project owes much to the collaboration and support of esteemed colleagues and collaborators. We extend our deepest gratitude to Gašper Tkačik and David B. Brückner for their pivotal roles in fostering insightful discussions and providing constructive feedback. Additionally, we acknowledge the financial support from the CMMS project and FIAS, which has been instrumental for the advancement of this research. We also thank the Center for Scientific Computing (CSC) at Goethe University Frankfurt for providing access to the Goethe-HLR cluster.

\section*{Funding Information}

This research work was funded by the LOEWE-Schwerpunkt ``Center for Multiscale Modelling in Life Sciences'' (CMMS), which is sponsored by the Hessian Ministry of Science and Research, Arts and Culture{\textemdash}Hessisches Ministerium für Wissenschaft und Forschung, Kunst und Kultur{\textemdash}(HMWK). The funders had no role in study design, data collection and analysis, decision to publish, or preparation of the manuscript.

\section*{Data Availability}

All code files will be available from a GitHub repository. All estimated parameter posterior distributions will be available from the same GitHub repository. The complete data bank will be available from a Zenodo repository.

\section*{Competing Interests}

The authors declare that no competing interests exist.


\newpage


\section*{Supporting information} \label{ApexC} 





\begin{adjustwidth}{0in}{0in}
\includegraphics[width = 5.5in, height = 5.5in]{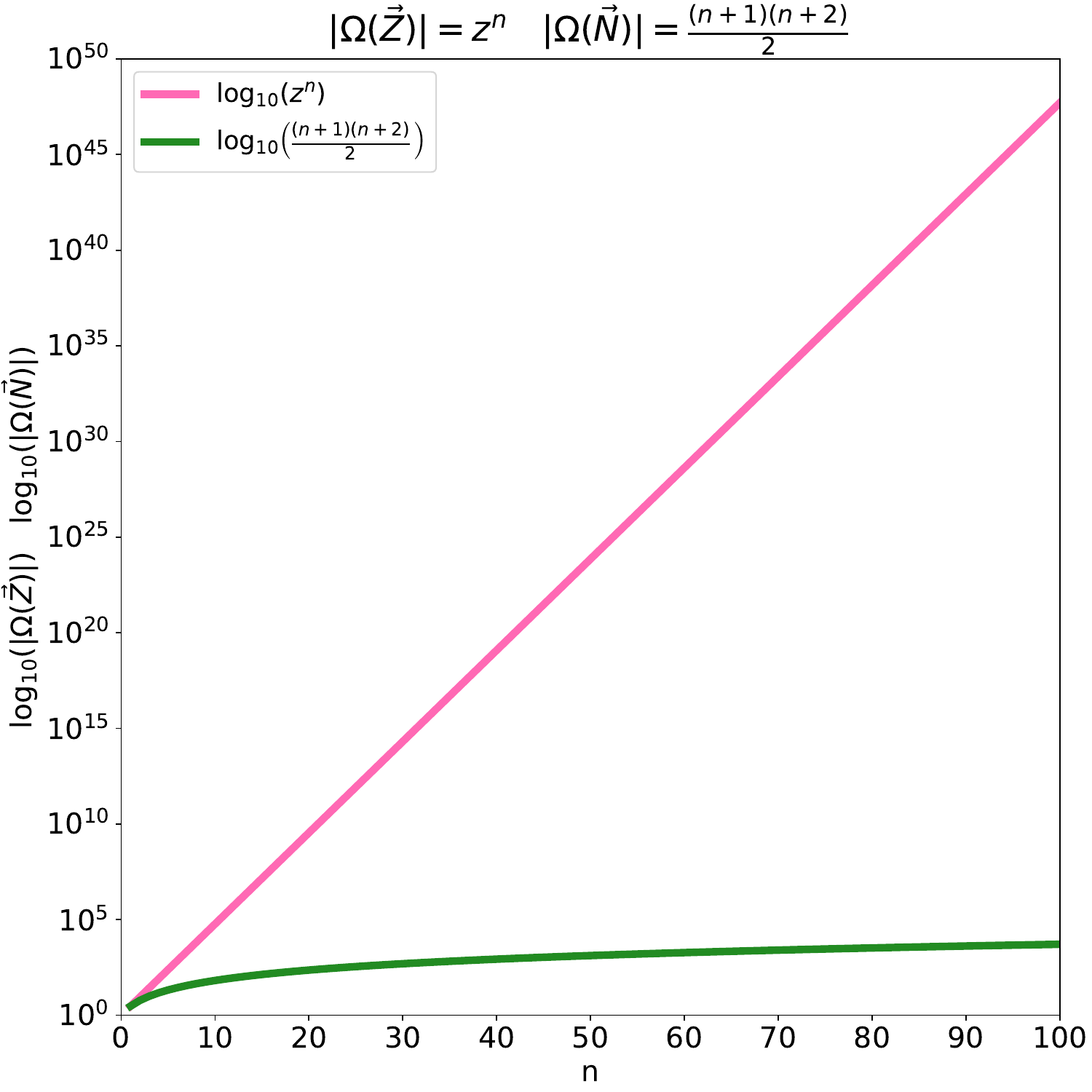} \centering 
\end{adjustwidth}
\begin{figure}[hpt!]
\caption{{\bf Comparison between patterning $\Omega(\vec{Z})$ and counting $\Omega(\vec{N})$ space cardinalities.} Here, if $n \gg 1$, then $((n+1)(n+2))/2 \ll z^{n}$. Notation: number of system cells $n$; number of cell fates $z$.}
\label{ApexC_Fig1}
\end{figure}
\clearpage

\begin{adjustwidth}{0in}{0in}
\includegraphics[width = 5.9in, height = 5.9in]{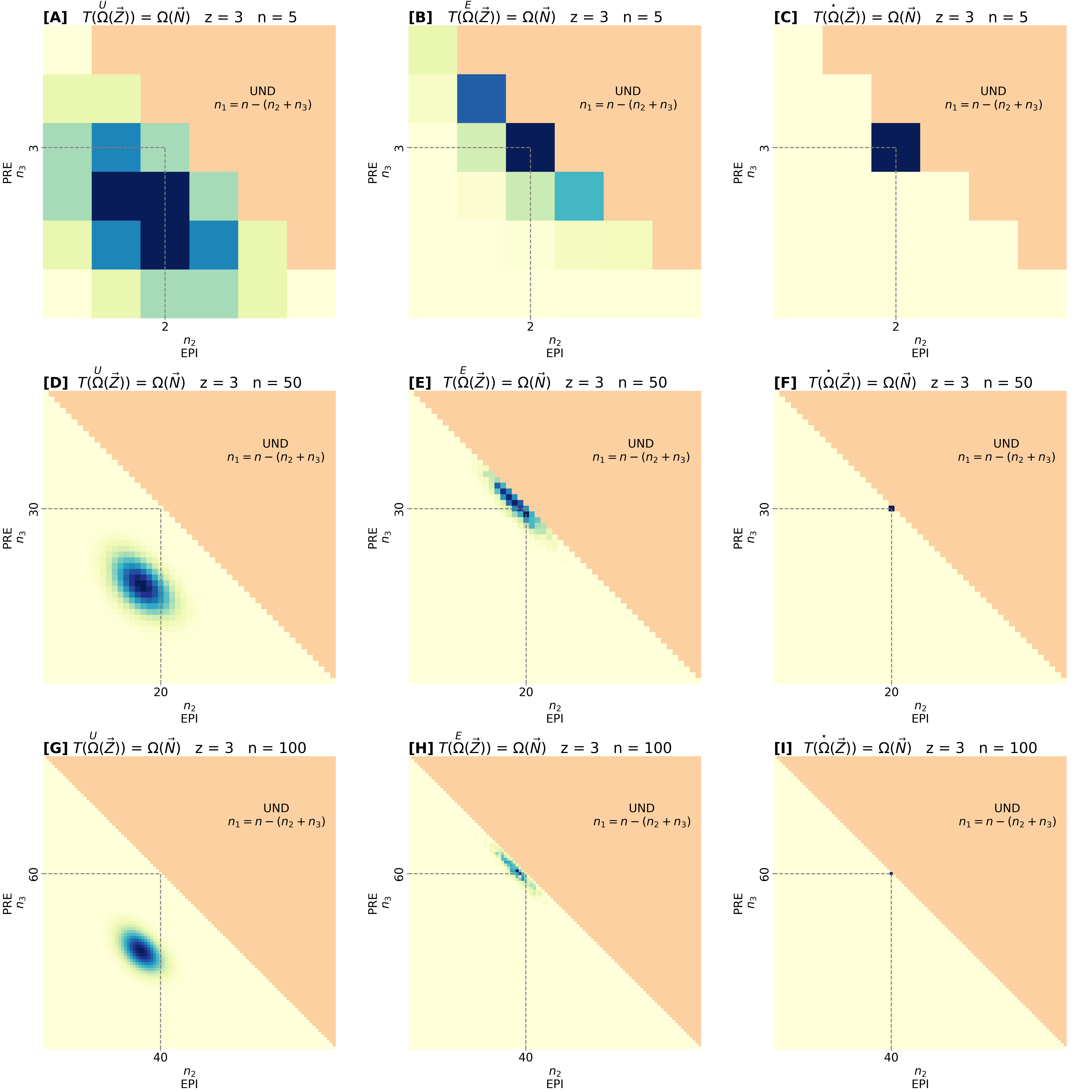} \centering 
\end{adjustwidth}
\begin{figure}[hpt!]
\caption{{\bf Counting vector probability heatmaps: ITWT system.} Facilitating the understanding of our notation, we present an illustrative example for the ITWT system scenarios where $z = 3$ and $n \in \{5,50,100\}$. Here, $n$ is the number of system cells and $z$ is the number of cell fates. {\bf [A-C]} Scenario: $n = 5$; $z = 3$. {\bf [D-F]} Scenario: $n = 50$; $z = 3$. {\bf [G-I]} Scenario: $n = 100$; $z = 3$. {\bf [A, D, G]} Counting vector probability heatmaps for images under $T$ (Definition \ref{Definition_Transformation}) of uniform patterning spaces. {\bf [B, E, H]} Counting vector probability heatmaps for images under $T$ of empirical patterning spaces. {\bf [C, F, I]} Counting vector probability heatmaps for images under $T$ of ideal patterning spaces.}
\label{ApexC_Fig2}
\end{figure}
\clearpage

\begin{adjustwidth}{0in}{0in}
\includegraphics[width = 5.9in, height = 5.9in]{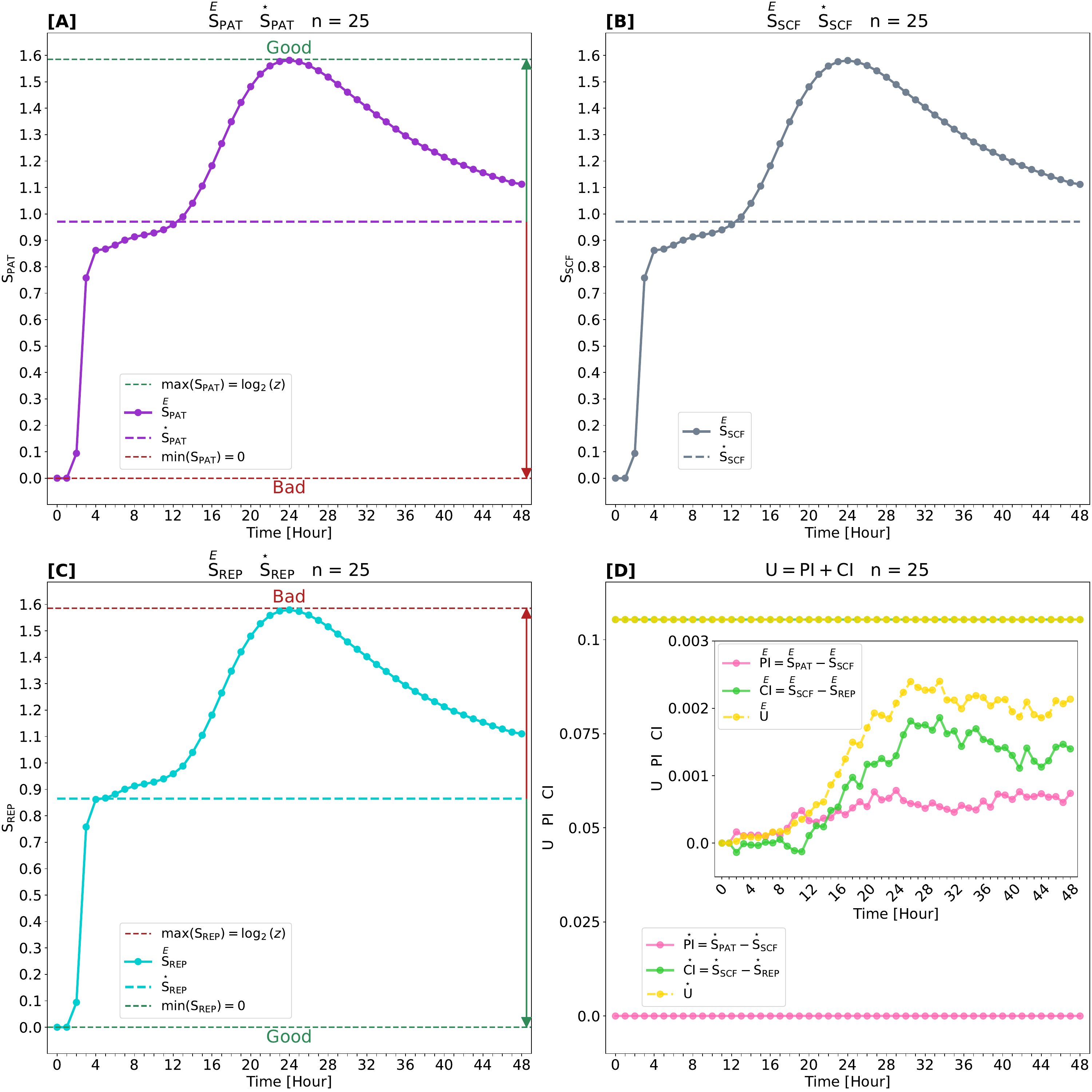} \centering 
\end{adjustwidth}
\begin{figure}[hpt!]
\caption{{\bf Comparing empirical information-theoretical measures against their ideal analogs (with respect to simulation time point): ITWT system.} Number of system cells $n = 25$. Note that green (good) and red (bad) signs indicate better and worse performance, respectively. {\bf [A]} Measure of patterning entropy. {\bf [B]} Measure of spatial-correlation-free entropy. {\bf [C]} Measure of reproducibility entropy. {\bf [D]} Measure of utility: positional information plus correlational information. The main plot shows the ideal measures. The inset shows the empirical measures.}
\label{ApexC_Fig3}
\end{figure}
\clearpage

\begin{adjustwidth}{0in}{0in}
\includegraphics[width = 5.9in, height = 5.975in]{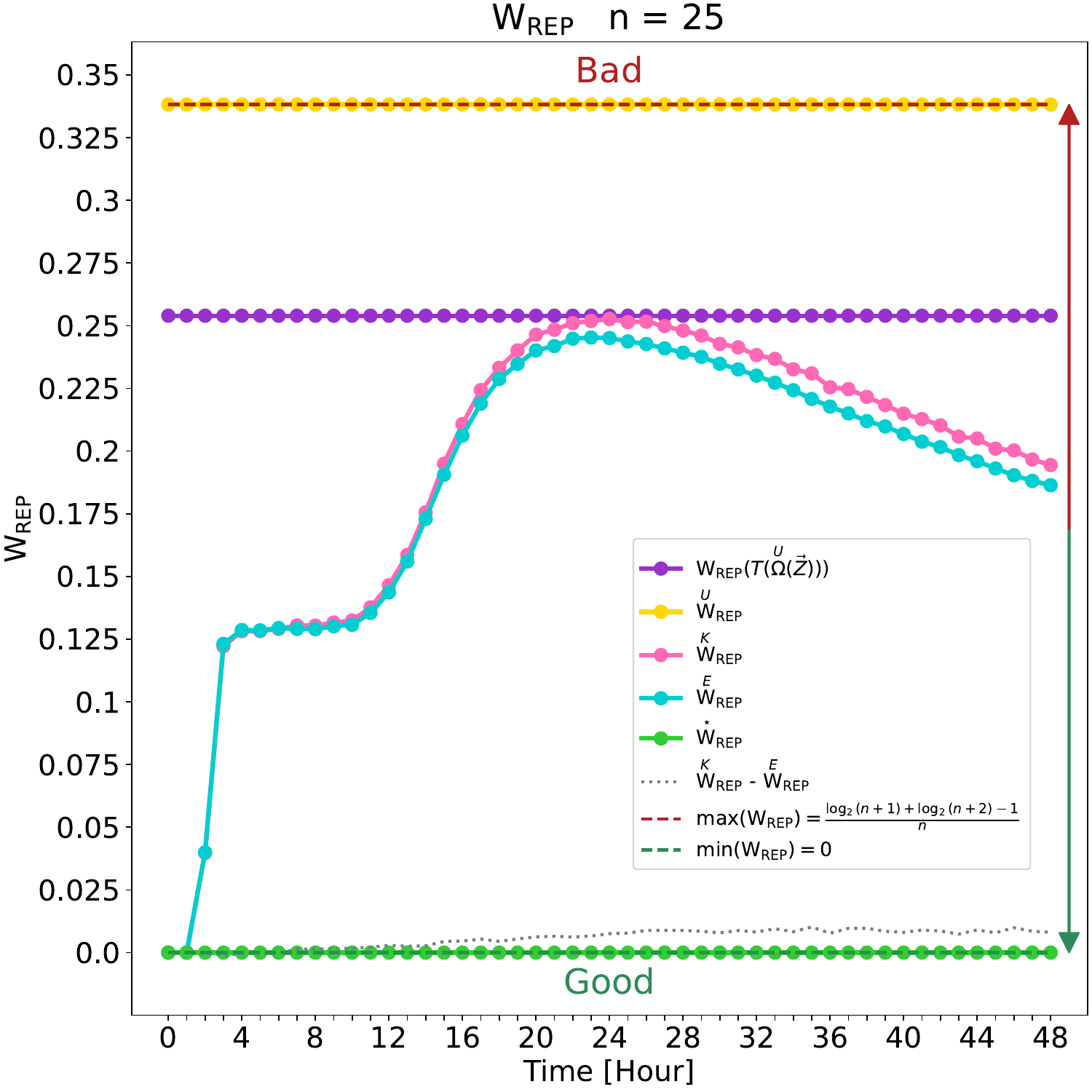} \centering 
\end{adjustwidth}
\begin{figure}[hpt!]
\caption{{\bf Reproducibility entropy (counting space) with respect to simulation time point: $\mathrm{W}_{\mathrm{REP}}$.} Number of system cells $n = 25$. Quantifications of reproducibility entropy $\mathrm{W}_{\mathrm{REP}}$ for five different cases: (dark orchid) image under $T$ of uniform patterning space; (gold) uniform counting space; (hot pink) multinomial counting space; (dark turquoise) empirical counting space; (lime green) ideal counting space. Note that green (good) and red (bad) signs indicate better and worse performance, respectively. The dotted (slate gray) line represents the difference between multinomial and empirical counting cases. The dashed (dark red and dark green) lines represent the maximum and minimum entropies, respectively.}
\label{ApexC_Fig4}
\end{figure}
\clearpage


\end{document}